\newtheorem{lemma}{\bf Lemma}
\newtheorem{theorem}{\bf Theorem}
\newtheorem{corollary}{\bf Corollary}
\newtheorem{proposition}{\bf Proposition}
\newtheorem{definition}{\bf Definition}
\newcommand*{\rom}[1]{\expandafter\@slowromancap\romannumeral #1@}
\begin{document}
\title{\textbf{Collaborative Compressive Detection with  Physical Layer Secrecy Constraints }}
\author{Bhavya~Kailkhura,~\IEEEmembership{Student Member,~IEEE}, Thakshila~Wimalajeewa,~\IEEEmembership{Member,~IEEE}, Pramod~K.~Varshney,~\IEEEmembership{Fellow,~IEEE}
\thanks{This work was supported in part by the National Science Foundation (NSF) under Grant No. 1307775.}
\thanks{Some related preliminary work was presented at the Forty Eighth Asilomar
Conf. on Signals, Systems and Computers, Nov 2014.}
\thanks{ 
Authors are with Department of EECS, Syracuse University, Syracuse, NY 13244. (email: bkailkhu@syr.edu; twwewelw@syr.edu; varshney@syr.edu).}
}
\maketitle
\begin{abstract}
This paper considers the problem of detecting a high dimensional signal (not necessarily sparse) based on compressed measurements with physical layer secrecy guarantees. First, we propose a collaborative compressive detection (CCD) framework to compensate for the performance loss due to compression with a single sensor. We characterize the trade-off between dimensionality reduction achieved by a universal compressive sensing (CS) based measurement scheme and the achievable performance of CCD analytically.
Next, we consider a scenario where the network operates in the presence of an eavesdropper who wants to discover the state of the nature being monitored by the system. 
To keep the data secret from the eavesdropper, we propose to use cooperating trustworthy nodes that assist the fusion center (FC) by injecting artificial noise to deceive the
eavesdropper. We seek the answers to the questions: Does CS help improve the security performance in such a framework? What are the optimal values of parameters which maximize the CS based collaborative detection performance at the FC while ensuring perfect secrecy at the eavesdropper? 
\end{abstract}

\begin{keywords}
Compressive detection, dimensionality reduction, compressive sensing, random projections, artificial noise injection, eavesdropper, secrecy.
\end{keywords}

\section{Introduction}
Compressive sensing (CS) is a new paradigm which enables the reconstruction of compressible or sparse signals using far
fewer samples than required by the Nyquist criterion~\cite{donoho,candes}. In this framework, a small collection of linear random projections
of a sparse signal contains sufficient information for signal recovery. To reconstruct the original signal from its compressed measurements, several algorithms have been proposed in the literature~\cite{cssurvey}.

While CS mostly deals with complete signal reconstruction, there are several signal processing applications
where complete signal recovery is not necessary. Instead
we might be only interested in solving inference problems such as detection, classification or estimation of certain parameters. 
To solve an inference problem where some prior information about the signal is available, a customized measurement scheme could be implemented such that the optimal inference performance is achieved for the particular signal. As an example, for a signal detection problem where the signal of interest is known, the optimal design is the matched filter which is dependent on the signal itself. However, it is possible that the signal that we wish to infer about may evolve over time. Thus, we are often interested in universal or agnostic design. A few attempts have been made in this direction to address the problems of inference in Compressive Signal Processing (CSP) literature in recent research~\cite{dave,haupt,thak}. 
CSP techniques are universal and agnostic to the signal structure and provide deterministic guarantees for a wide variety of signal classes. 

The authors in~\cite{dave,bcd,durate} considered the deterministic signal detection problem in
the compressed measurement domain where the performance limits of detection with compressed measurements were investigated. 
For signals that are not necessarily sparse, it was shown that a certain performance loss will be incurred due to compression when compared to the optimal test that acquires original measurements using the traditional measurement scheme.  
For stochastic signals, the compressive detection problem
(i.e., detecting stochastic signals in the compressed measurement domain) was considered in~\cite{thak,gen}. Both works focused only on compressive detection of `zero-mean' stochastic signals based on observations corrupted by additive noise. Closed form expressions were derived for performance
limits and performance loss due to compression was characterized analytically. 
A signal classification problem based on compressed measurements was considered in~\cite{smash} where the authors developed a manifold based model for compressive classification. 
The authors in~\cite{haput1,haput2} studied the performance of compressive sampling in detection and classification setup and introduced the generalized restricted isometry property that states that the angle between two vectors is preserved under random projections.
Sparse event detection by sensor networks under a CS framework was considered in~\cite{hansparse}. The problem of detection of spectral targets based on noisy incoherent projections was addressed in~\cite{hyper}.  
Schemes for the design and optimization of projection matrices 
for signal detection with compressed measurements have been proposed in~\cite{vila,bai,shi,pso}. 

As mentioned earlier, CSP techniques are universal and agnostic to the signal structure and, therefore, are attractive in many practical applications. 
Despite its attractiveness to solve high dimensional inference problems, CSP suffers from a few major drawbacks which limit its applicability in practice. A CS based measurement scheme incurs a certain performance loss due to compression when compared to the traditional measurement scheme while detecting non sparse signals. This can be seen as the price one pays for universality in terms of inference performance. 
In this paper, we propose a collaborative compressive detection (CCD) framework to compensate for the performance loss due to compression. The CCD framework comprises of a group of spatially distributed nodes which acquire vector observations regarding the phenomenon of interest. Nodes send a compressed summary of their observations to the Fusion Center (FC) where a global decision is made. 
In this setup, we characterize the trade-off between  dimensionality reduction in a universal CS based measurement scheme and the achievable performance. In our preliminary work~\cite{bhavyaphy}, we analyzed the problem
only for the deterministic signal case. In the current work, we significantly extend our previous work and investigate the problem for two different cases: $1)$ when the signal of interest is deterministic and $2)$ when the signal of interest is random. It is worthwhile to point out that, in contrast to~\cite{thak,gen} where compressive detection of `zero-mean' stochastic signals was considered, we study a more general problem with `non zero-mean' stochastic signals. Note that, some of these existing results can be seen as a special case of analytical results derived in this paper. 
For both the cases, we show that for a fixed signal to noise ratio (SNR), if the number of collaborating nodes is greater than $(1/c)$, where $0\leq c\leq 1$ is the compression ratio, the loss due to compression can be recovered.

In a CCD framework, the FC receives compressive observation vectors from the nodes and makes the global decision about the presence of the signal vector. The transmissions by the nodes, however, may be observed by an eavesdropper. The secrecy of a detection system against eavesdropping attacks is of utmost importance~\cite{physecdet}.
In a fundamental sense, there are two motives for any eavesdropper (Eve), namely \emph{selfishness} and \emph{maliciousness}, to compromise the secrecy of a given inference network. For instance, some of the nodes within a cognitive radio network (CRN) may selfishly take advantage of the FC's inferences and may compete against the CRN in using the primary user's channels without paying any participation costs to the network moderator. In another example, if the radar decisions are leaked to a malicious aircraft, the adversary aircraft can maliciously adapt its strategy against a given distributed radar network accordingly so as to remain invisible to the radar and in clandestine pursuit of its mission. Therefore, in the recent past, there has been a lot of interest in the research community in addressing eavesdropping attacks on inference networks. Recently, a few attempts have been made to address the problem of eavesdropping threats on distributed detection network. However, a similar study in a CSP framework is missing from the literature.

Next, we investigate the CCD problem when the network operates in the presence of an eavesdropper who wants to discover the state of the nature being monitored by the system. 
While security issues with CS based measurement schemes have been considered in~\cite{cssec1,cssec2,cssec3}, our work is considerably different. 
In contrast to~\cite{cssec1,cssec2,cssec3}, where performance limits of secrecy of CS based measurement schemes were analyzed (under different assumptions), we look at the problem from a practical perspective. We pursue a more active approach where the problem of optimal system design with secrecy guarantees is studied in an optimization setup.
More specifically, we propose to use cooperating trustworthy nodes that assist the FC by injecting artificial noise to deceive the eavesdroppers to improve the security performance of the system. The addition of artificial noise to node transmissions is a data falsification scheme that is employed to confuse the eavesdropper.
We consider the problem of determining optimal system parameters which maximize the detection performance at the FC, while ensuring perfect secrecy at the eavesdropper (information of the eavesdropper is exactly zero). In the process of determining optimal system parameters, we seek the answer to the question: Does compression help in improving the security performance of the system? 
At first glance, it seems intuitive that compression should
always improve the security performance. However, we show that this argument is not necessarily
true. In fact, security performance of the system is independent of the compression ratio in the perfect secrecy regime.

\subsection{Main Contributions}

Our work presented in this paper is motivated by a number of factors. First, CS based measurement schemes incur
a certain performance loss due to compression when compared
to traditional measurement schemes while detecting non sparse
signals, and, therefore, techniques to mitigate this loss are desirable.
Next, CSP has been proposed relatively recently and, therefore, security issues for such a framework have been left un-addressed so far.
Also, most of the works on CSP mainly focus on deriving theoretical performance bounds under different contexts. Despite its theoretical importance, practical implications of these bounds for system design have not been investigated in literature. 
In this paper, we take some first steps in addressing these issues for solving high dimensional signal detection problems using only compressed measurements taking security aspects into consideration. The main contributions of this paper are summarized as
follows.
\begin{itemize}
\item We propose a collaborative compressive detection (CCD) framework to compensate for the performance loss due to compression.
\item We characterize the trade-off between
dimensionality reduction in a universal compressive sensing
based measurement scheme and the achievable performance
of CCD analytically.
\item When the network operates in the presence of an eavesdropper, we employ artificial noise injection techniques to improve secrecy performance. Theoretical performance bounds for the scheme are also derived.
\item We consider the problem of determining optimal system parameters which maximize the detection performance at the FC, while ensuring perfect secrecy at the eavesdropper.
\end{itemize}

The rest of the paper is organized as follows. Section~\ref{sec2} presents the observation model and the problem formulation. In Section~\ref{sec3}, performance of collaborative compression detection is analyzed for both deterministic and random signal cases. In Section~\ref{sec4}, we investigate the problem where the network operates in the presence of an eavesdropper and propose artificial noise injection techniques to improve secrecy performance. In Section~\ref{sec5}, we study the problem of determining optimal system parameters which maximize the detection performance at the
FC, while ensuring perfect secrecy at the eavesdropper.
Concluding remarks and possible future directions are given in Section~\ref{sec6}.

\section{Collaborative Compressive Detection }
\label{sec2}
\begin{figure}[!t]
\centering
    \includegraphics[width=3in]{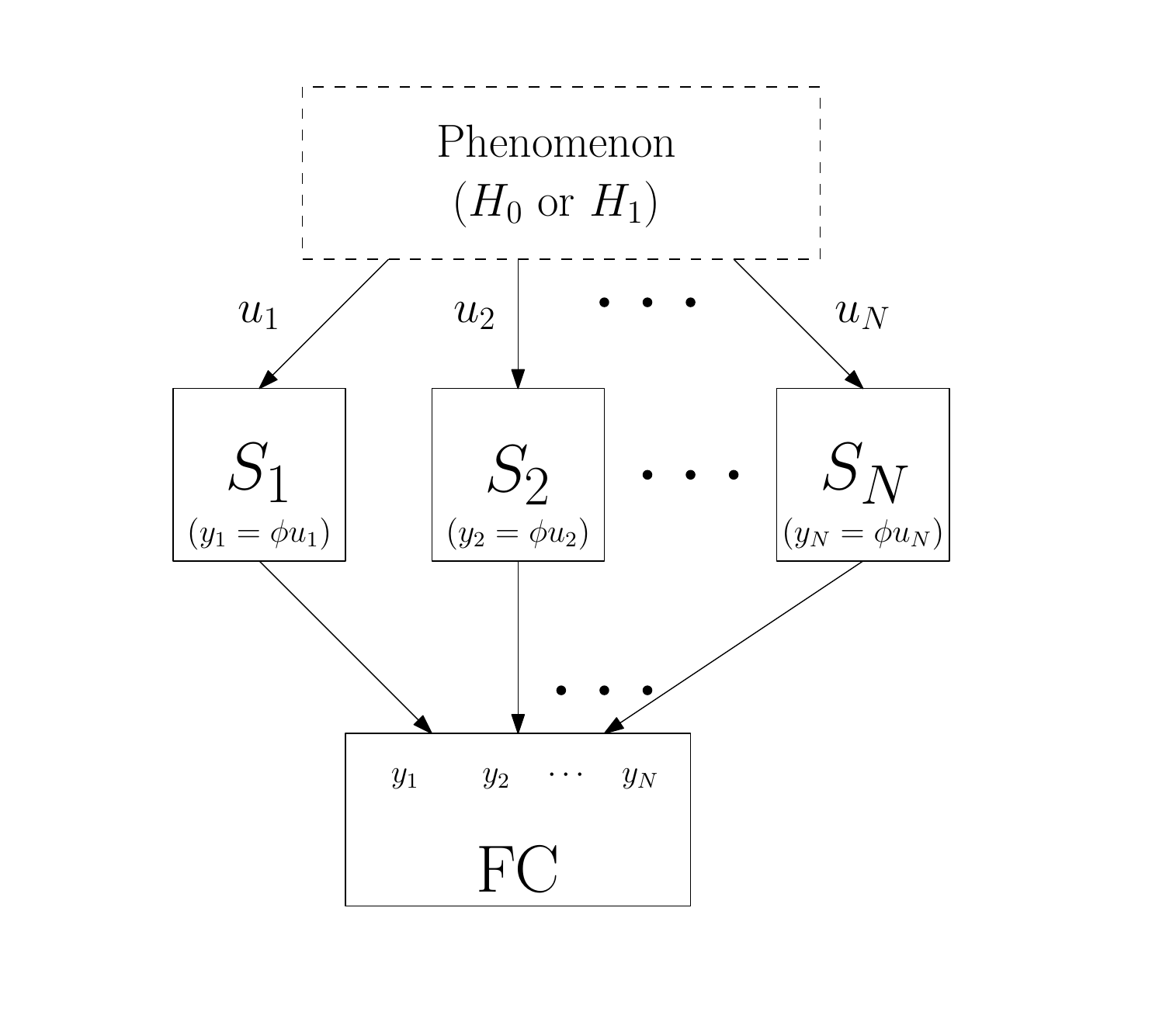}
\vspace{-0.2in}
    \caption{Collaborative Compressive Detection Network}
    \label{Fig: model}
\end{figure}

\subsection{Observation Model}
Consider two hypotheses  $H_{0}$ (signal is absent) and $H_{1}$ (signal is present).
Also, consider a parallel network, comprised of a central entity (known as the FC)
and a set of $N$ nodes, which
faces the task of determining which of the two hypotheses is true (see Figure~\ref{Fig: model}).
Prior probabilities of the two hypotheses $H_{0}$ and $H_{1}$
are denoted by $P_{0}$ and $P_{1}$, respectively. 
The nodes observe the phenomenon (high dimensional signal), carry out local compression (low dimensional projection), 
and then send their local summary statistic to the FC. The FC makes a final decision after processing the locally compressed observations.

For the $i$th node observed signal, $u_i$ can be modeled as
\begin{center}
\vspace{-0.2in}
\begin{eqnarray*}
&& H_0\; :\quad u_i=v_i\\
&& H_1\; :\quad u_i=s+v_i
\end{eqnarray*}
\end{center}
where $u_i$ is the $P \times 1$ observation vector, $s$ is  either deterministic or random Gaussian signal vector (not necessarily sparse) to be detected. Specifically let $s \sim \mathcal{N} (\mu,\alpha^{-1} I_P)$ and additive noise $v_i \sim \mathcal{N} (0,\beta^{-1} I_P)$
where $x\sim\mathcal{N} (\mu,\Sigma)$ denotes that the vector $x$ is distributed as multivariate Gaussian with mean vector $\mu$ and the covariance matrix $\Sigma$, and $I_P$ is the $P\times P$ identity matrix.  
Note that, the deterministic signal can be considered as a special case of the random signal $s$ with variance $\alpha^{-1}=0$. 
Observations at the nodes are assumed to be conditionally independent and identically distributed.

Each node sends a $M$-length $(< P)$ compressed version $y_i$ of its $P$-length observation $u_i$ to the FC. The
collection of $M$-length universally sampled observations is given by, $y_i=\phi u_i$, where $\phi$ is an $M \times P$  projection matrix, which is assumed to be the same for all the nodes, and $y_i$ is the $M\times 1$ compressed observation vector (local summary statistic).  

Under the two hypotheses, the local summary statistic is 
\vspace{-0.2in}
\begin{center}
\begin{eqnarray*}
&& H_0\; :\quad y_i=\phi v_i\\
&& H_1\; :\quad y_i=\phi s+ \phi v_i.
\end{eqnarray*}
\end{center}
The FC receives compressed observation vectors, $\mathbf{y}=[y_1,\cdots,y_N]$, from the nodes via error free communication channels and makes the global decision about the phenomenon.

\subsection{Binary Hypothesis Testing at the Fusion Center}
We consider the detection problem in a Bayesian setup where the
performance criterion at the FC is the probability of error.
The FC makes the global decision about the phenomenon by considering the likelihood ratio test (LRT) which is given by

\begin{equation}
\label{test}
\prod\limits_{i=1}^{N}\dfrac{f_1(y_i)}{f_0(y_i)} \quad \mathop{\stackrel{H_1}{\gtrless}}_{H_0} \quad  \dfrac{P_0}{P_1}.
\end{equation}
Notice that, under the two hypotheses we have the following probability density functions (PDFs);

\begin{center}
\begin{eqnarray}
&& f_0(y_i)=\frac{\exp(-\frac{1}{2}y_i^T(\beta^{-1}\phi \phi^T)^{-1}y_i)}{|\beta^{-1}\phi \phi^T|^{1/2}(2\pi)^{M/2}},\label{f0}\\
&& f_1(y_i)=\frac{\exp(-\frac{1}{2}(y_i-\phi\mu)^T((\alpha^{-1}+\beta^{-1})\phi \phi^T)^{-1}(y_i-\phi\mu))}{|(\alpha^{-1}+\beta^{-1})\phi \phi^T|^{1/2}(2\pi)^{M/2}}.\label{f1}
\end{eqnarray}
\end{center}

After plugging in~\eqref{f0} and~\eqref{f1} in~\eqref{test} and taking logarithms on both sides, we obtain an equivalent test that simplifies to 

\begin{equation*}
\frac{\alpha^{-1}}{\beta^{-1}}\sum\limits_{i=1}^{N} y_i^T(\phi \phi^T)^{-1}y_i +2\sum\limits_{i=1}^{N} y_i^T(\phi \phi^T)^{-1}\mu \quad \mathop{\stackrel{H_1}{\gtrless}}_{H_0} \quad \lambda
\end{equation*}
where $\lambda=(\alpha^{-1}+\beta^{-1})\left[2 \log \frac{P_0}{P_1} +NM\log\left(1+\frac{\alpha^{-1}}{\beta^{-1}}\right)\right]+N(\phi\mu)^T(\phi\phi^T)^{-1}\phi\mu
$.

\noindent For simplicity, we assume that $P_0=P_1$. The test statistic for the collaborative compressive detector can be written in a compact form as
\begin{equation}
\label{ccd}
\Lambda(\mathbf{y})= \frac{\alpha^{-1}}{\beta^{-1}}\sum\limits_{i=1}^{N} \Lambda_1(y_i)+2\sum\limits_{i=1}^{N} \Lambda_2({y_i}) 
\end{equation}
where $\Lambda_1(y_i)=y_i^T(\phi \phi^T)^{-1}y_i$ and $\Lambda_2({y_i})=y_i^T(\phi \phi^T)^{-1}\mu$.

We would like to point out that the test statistic for the deterministic signal and random signal with zero mean cases can be seen as a special case of the above test statistic. More specifically, for the deterministic signal $s$, the test statistic is given by $\Lambda(\mathbf{y})=\sum\limits_{i=1}^{N} \Lambda_1(y_i)$ and for the zero mean random signal the test statistic is given by $\Lambda(\mathbf{y})=\sum\limits_{i=1}^{N} \Lambda_2(y_i)$ which is consistent with~\cite{bhavyaphy} and~\cite{thak}.

\section{Performance Analysis of Collaborative Compressive Detection}
\label{sec3}
First, we look at the deterministic signal case and characterize the performance of the collaborative compressive detector.

\subsection{Deterministic Signal Case}
The optimal test at the FC can be written in a compact form as 
\begin{equation*}
\sum\limits_{i=1}^{N} y_i^T(\phi \phi^T)^{-1}\phi s \quad \mathop{\stackrel{H_1}{\gtrless}}_{H_0} \quad \lambda,
\end{equation*}
with $\lambda=\frac{N}{2} s^T\phi^T(\phi \phi^T)^{-1}\phi s$. 
The decision statistic for the collaborative compressive detector is given as
\begin{equation}
\Lambda(\mathbf{y})= \sum\limits_{i=1}^{N} y_i^T(\phi \phi^T)^{-1}\phi s .
\end{equation}
We analytically characterize the performance of the collaborative compressive detector in terms of the probability of error which is defined as
\vspace{-0.15in}
\begin{center}
\begin{eqnarray*}
&& P_E=\dfrac{1}{2} P_F+\dfrac{1}{2} (1-P_D)
\end{eqnarray*}
\end{center}
where, $P_F=P(\Lambda(\mathbf{y})>\lambda|H_0)$ and $P_D=P(\Lambda(\mathbf{y})>\lambda|H_1)$
is the probability of false alarm and the probability of detection, respectively.
To simplify the notations, we define 

\begin{equation*}
\hat{P}= \phi^T(\phi \phi^T)^{-1}\phi
\end{equation*}
as the orthogonal projection operator onto row space of $\phi$.
Using this notation, it is easy to show that

\[ \Lambda(\mathbf{y}) \sim \left\{ \begin{array}{rll}
				\mathcal{N}(0,\beta^{-1}N\|\hat{P}s\|_2^2),  & \mbox{under}\ H_{0} \\
			    \mathcal{N}(N\|\hat{P}s\|_2^2,\beta^{-1}N\|\hat{P}s\|_2^2)  & \mbox{under}\ H_{1}
				\end{array}\right. 
\] 
where $\|\hat{P}s\|_2^2=s^T\phi^T(\phi \phi^T)^{-1}\phi s$.

Thus, we have
\begin{equation}
P_F = Q \left(\frac{\frac{N}{2}\|\hat{P}s\|_2^2}{ \sqrt{N\beta^{-1}}\|\hat{P}s\|_2}\right) 
= Q\left(\frac{1}{2}\sqrt{\frac{N}{\beta^{-1}}}\|\hat{P}s\|_2\right)
\end{equation}
and
\begin{equation}
P_D= Q \left(\frac{\frac{N}{2}\|\hat{P}s\|_2^2-N\|\hat{P}s\|_2^2}{ \sqrt{N\beta^{-1}}\|\hat{P}s\|_2}\right)  
= Q\left(-\frac{1}{2}\sqrt{\frac{N}{\beta^{-1}}}\|\hat{P}s\|_2\right)
\end{equation}
where $Q(x)=\frac{1}{\sqrt{2\pi}}\int_{x}^{\infty}\exp(-\frac{u^2}{2})\;du$.

\noindent The probability of error can be calculated to be

\begin{equation}
\label{pe}
P_E\;=\;Q\left(\frac{1}{2}\sqrt{\frac{N}{\beta^{-1}}}\|\hat{P}s\|_2\right).
\end{equation} 

Next, we derive the modified deflection coefficient (first proposed in~\cite{MDC}) of the system and show its monotonic relationship with the probability of error as given in~\eqref{pe}. The modified deflection coefficient provides a good measure of the detection performance
since it characterizes the variance-normalized distance between
the centers of two conditional PDFs.
Notice that, for the deterministic signal case, $y_i$ is distributed under the hypothesis $H_j$ as, $y_i\sim \mathcal{N}(\mu_j^i,\Sigma_j^i)$. The modified deflection coefficient $D(\mathbf{y})$ can be obtained to be
\begin{eqnarray*}
D(\mathbf{y})&=& \sum\limits_{i=1}^{N} (\mu_1^i-\mu_0^i)^T (\Sigma_1^i)^{-1} (\mu_1^i-\mu_0^i)\\
&=& N\frac{\|\hat{P}s\|_2^2}{\beta^{-1}}.
\end{eqnarray*}

The monotonic relationship between $P_E$ as given in~\eqref{pe} and $D(\mathbf{y})$ can be observed by noticing that
\begin{equation*}
P_E=Q\left(\frac{\sqrt{D(\mathbf{y})}}{2}\right).
\end{equation*}
Later in the paper, we will use the modified deflection coefficient to characterize the detection performance of the system. 

Notice that, the detection performance is a function of the projection operator $\hat{P}$. In general, this performance could be either quite good or quite poor depending on the random projection matrix $\phi$. 
Next, we provide bounds on the performance of the collaborative compressive detector using the concept of $\epsilon$-stable embedding.\footnote{To construct
linear mappings that satisfy an $\epsilon$-stable embedding property is beyond the scope of this work. We refer interested readers to~\cite{dave}.}

\begin{definition}
\label{stableembedding}
Let $\epsilon\in(0,1)$ and $\mathcal{S},\mathcal{X}\subset\mathbb{R}^P$. We say that a mapping $\psi$ is an $\epsilon$-stable embedding of $(\mathcal{S},\mathcal{X})$ if
\begin{equation*}
(1-\epsilon)\; \|s-x\|_2^2 \leq \|\psi s-\psi x\|_2^2\leq (1+\epsilon)\; \|s-x\|_2^2,
\end{equation*}
for all $s\in \mathcal{S}$ and $x\in \mathcal{X}$.
\end{definition}

Using this concept, we state our result in the next theorem.
\begin{theorem}
\label{th1}
Suppose that $\sqrt{\frac{P}{M}}\hat{P}$ provides an $\epsilon$-stable embedding of $(\mathcal{S},\{0\})$. Then for any deterministic signal $s\in S$, the probability of error of the collaborative compressive detector satisfies

\begin{small}
\begin{equation*}
Q\left(\sqrt{1+\epsilon}\frac{\sqrt{N}}{2}\sqrt{\frac{M}{P}}\frac{\|s\|_2}{\sqrt{\beta^{-1}}}\right)\leq P_E\leq Q\left(\sqrt{1-\epsilon}\frac{\sqrt{N}}{2}\sqrt{\frac{M}{P}}\frac{\|s\|_2}{\sqrt{\beta^{-1}}}\right).
\end{equation*} 
\end{small}
\end{theorem}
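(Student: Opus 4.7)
The plan is to combine two ingredients already in hand: the closed-form expression $P_E = Q\!\left(\tfrac{1}{2}\sqrt{N/\beta^{-1}}\,\|\hat{P}s\|_2\right)$ derived earlier in this subsection, and the $\epsilon$-stable embedding hypothesis in Definition~\ref{stableembedding}. Since $P_E$ depends on the random projection matrix $\phi$ only through the scalar $\|\hat{P}s\|_2$, the whole task reduces to bracketing this quantity in terms of $\|s\|_2$.

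First I would apply Definition~\ref{stableembedding} to the mapping $\psi = \sqrt{P/M}\,\hat{P}$ with the pair $(\mathcal{S},\{0\})$, choosing $x = 0$. This immediately yields
\begin{equation*}
(1-\epsilon)\,\|s\|_2^2 \;\leq\; \frac{P}{M}\,\|\hat{P}s\|_2^2 \;\leq\; (1+\epsilon)\,\|s\|_2^2
\end{equation*}
for every $s \in \mathcal{S}$. Taking square roots and rearranging gives the two-sided inequality
\begin{equation*}
\sqrt{1-\epsilon}\,\sqrt{\tfrac{M}{P}}\,\|s\|_2 \;\leq\; \|\hat{P}s\|_2 \;\leq\; \sqrt{1+\epsilon}\,\sqrt{\tfrac{M}{P}}\,\|s\|_2.
\end{equation*}

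Next I would substitute these bounds into the argument of $Q(\cdot)$ in the closed-form for $P_E$. Because $Q(\cdot)$ is strictly monotonically decreasing, the upper bound on $\|\hat{P}s\|_2$ yields the lower bound on $P_E$, and the lower bound on $\|\hat{P}s\|_2$ yields the upper bound on $P_E$, producing exactly the statement of Theorem~\ref{th1}.

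There is no real obstacle here beyond bookkeeping: the result is essentially a one-line corollary of the earlier exact formula for $P_E$ together with the definition of an $\epsilon$-stable embedding, with monotonicity of $Q$ doing the rest. The only subtle point to flag is that the hypothesis is stated for $\sqrt{P/M}\,\hat{P}$ rather than $\hat{P}$ itself, so the factor $\sqrt{M/P}$ must be carried through correctly when converting the embedding inequality into bounds on $\|\hat{P}s\|_2$; this is where the $\sqrt{M/P}$ appearing in the theorem's statement arises.
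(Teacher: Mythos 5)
Your proposal is correct and follows exactly the paper's own argument: apply the $\epsilon$-stable embedding of $\sqrt{P/M}\,\hat{P}$ with $x=0$ to bracket $\|\hat{P}s\|_2$ between $\sqrt{1\mp\epsilon}\,\sqrt{M/P}\,\|s\|_2$, then substitute into the exact formula $P_E = Q\bigl(\tfrac{1}{2}\sqrt{N/\beta^{-1}}\,\|\hat{P}s\|_2\bigr)$ using the monotonicity of $Q$. Your explicit note about the direction reversal under $Q$ and the bookkeeping of the $\sqrt{M/P}$ factor is a helpful elaboration of the step the paper compresses into ``the result follows.''
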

\begin{proof}
By our assumption that $\sqrt{\frac{P}{M}}\hat{P}$ provides an
$\epsilon$-stable embedding of $(\mathcal{S},\{0\})$, we know that
\begin{equation}
\label{c1}
\sqrt{1-\epsilon}\; \|s\|_2 \leq \sqrt{\frac{P}{M}}\;\|\hat{P}s\|_2\leq \sqrt{1+\epsilon}\; \|s\|_2.
\end{equation}
Combining~\eqref{c1} with~\eqref{pe}, the result follows. 
\end{proof}

For nominal values of $\epsilon$, $P_E$ can be approximated as 

\begin{equation*}
P_E \approx Q\left(\frac{\sqrt{N}}{2}\sqrt{\frac{M}{P}}\frac{\|s\|_2}{\sqrt{\beta^{-1}}}\right).
\end{equation*}

The above expression tells us in a precise way how much information we lose by using low dimensional projections rather than the signal samples themselves. It also tells us how many nodes are needed to collaborate to compensate for the loss due to compression. More specifically, if $N\geq \frac{1}{c}$, where $c=\frac{M}{P}$ is defined as the compression ratio at each node, the loss due to compression can be recovered.
Notice that, for a fixed $M$, as the number of collaborating nodes approaches infinity, i.e., $N\rightarrow\infty$, the probability of error vanishes. On the other hand, to guarantee $P_E\leq \delta$, parameters $M,\;P$ and $N$ should satisfy 

\begin{equation*}
cN\geq \frac{4}{SNR}(Q^{-1}(\delta))^2 
\end{equation*}
where $SNR=\frac{\|s\|_2^2}{\beta^{-1}}$.
\begin{figure}[!t]
\centering
    \includegraphics[width=3.5in]{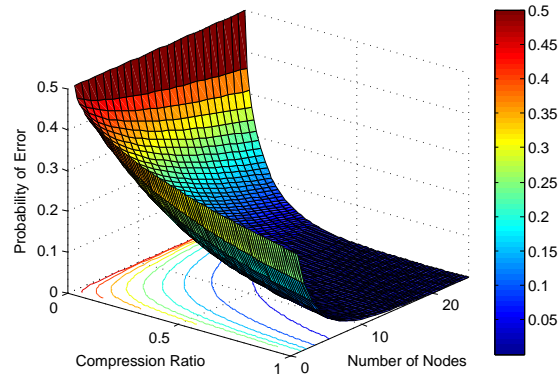}
\vspace{-0.2in}
    \caption{Prob. of error as a function of number of nodes and compression ratio $c=M/P$ for $SNR=3dB$}
    \label{collvscomp}
\end{figure}

To corroborate our theoretical results, in Figure~\ref{collvscomp} we present the behavior of $P_E$ with respect of collaboration and compression. We plot $P_E$ as a function of the number of nodes $N$ and compression ratio $c$. We assume that $SNR=3dB$. It can be seen from Figure~\ref{collvscomp} that $P_E$ is a monotonically decreasing function of $c$ and $N$, and, therefore, the performance loss due to compression can be compensated by exploiting spatial diversity or collaboration.
  
In order to more clearly illustrate the behavior of $P_E$ with respect to compression and collaboration, we also establish the following corollary of Theorem~\ref{th1} using the Chernoff Bound.
\begin{corollary}
\label{c2}
Suppose that $\sqrt{\frac{P}{M}}\hat{P}$ provides an $\epsilon$-stable embedding of $(\mathcal{S},\{0\})$. Then for any deterministic signal $s\in S$, we have

\begin{equation*}
P_E \leq \frac{1}{2}\exp\left(-\frac{1}{8}cN\frac{\|s\|_2^2}{\beta^{-1}}\right).
\end{equation*}
\end{corollary}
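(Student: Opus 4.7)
The plan is to start from the Theorem~\ref{th1} bound (or its nominal-$\epsilon$ approximation) and apply the standard Chernoff-type bound on the Gaussian tail function, namely $Q(x) \le \tfrac{1}{2}\exp(-x^{2}/2)$ for $x \ge 0$. This bound is a direct consequence of the inequality $Q(x) = \frac{1}{\sqrt{2\pi}}\int_x^\infty e^{-u^2/2}\,du \le \frac{1}{2}e^{-x^2/2}$, which follows by completing the square (or equivalently by the Chernoff method applied to a standard normal random variable).

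First, I invoke Theorem~\ref{th1} under the $\epsilon$-stable embedding hypothesis, taking the upper bound
\begin{equation*}
P_E \le Q\!\left(\sqrt{1-\epsilon}\,\frac{\sqrt{N}}{2}\sqrt{\frac{M}{P}}\frac{\|s\|_2}{\sqrt{\beta^{-1}}}\right),
\end{equation*}
and recall that for nominal values of $\epsilon$ this reduces to $P_E \le Q\!\left(\tfrac{\sqrt{N}}{2}\sqrt{M/P}\,\|s\|_2/\sqrt{\beta^{-1}}\right)$, which matches the approximation stated just after Theorem~\ref{th1}. Substituting $c = M/P$ and applying the Chernoff bound, the argument $x = \tfrac{\sqrt{N}}{2}\sqrt{c}\,\|s\|_2/\sqrt{\beta^{-1}}$ yields $x^2/2 = \tfrac{1}{8}\,cN\,\|s\|_2^2/\beta^{-1}$, which is exactly the exponent appearing in the claimed bound. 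Putting the pieces together delivers
\begin{equation*}
P_E \le \tfrac{1}{2}\exp\!\left(-\tfrac{1}{8}\,cN\,\tfrac{\|s\|_2^2}{\beta^{-1}}\right),
\end{equation*}
completing the argument.

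There is no real obstacle here: the proof is essentially a one-line substitution once Theorem~\ref{th1} and the Gaussian Chernoff bound are invoked. The only subtle point is whether to carry the $\sqrt{1-\epsilon}$ factor through to obtain a sharper bound of the form $\tfrac{1}{2}\exp(-\tfrac{1-\epsilon}{8}cN\cdot\mathrm{SNR})$, or to absorb it into the ``nominal $\epsilon$'' approximation already used in the text to obtain the cleaner stated form. I would follow the latter convention, consistent with the approximation stated immediately below Theorem~\ref{th1}, so that the corollary reads in its stated form and is directly interpretable in terms of the collaboration--compression product $cN$ and the per-node SNR $\|s\|_2^2/\beta^{-1}$.
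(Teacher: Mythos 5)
Your proof is correct and matches the paper's intended argument exactly: the paper offers no separate proof for this corollary but states that it follows from Theorem~\ref{th1} via the Chernoff bound $Q(x)\le\frac{1}{2}e^{-x^2/2}$, which is precisely your route, and your exponent computation $x^2/2=\frac{1}{8}cN\|s\|_2^2/\beta^{-1}$ is right. Your remark about the dropped $\sqrt{1-\epsilon}$ factor correctly identifies the only looseness, and it is inherited from the paper's own ``nominal $\epsilon$'' convention rather than being a gap in your argument.
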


Corollary~\ref{c2} suggests that the error probability vanishes exponentially fast as we increase either the compression ratio $c$ or the number of collaborating nodes $N$. 

Next, we extend the above analysis to the case where the signal of interest is a random signal such that $s\sim\mathcal{N}(\mu,\alpha^{-1}I_P)$.

\subsection{Random Signal with Arbitrary Mean Case}
Let the signal of interest be $s \sim \mathcal{N} (\mu,\alpha^{-1} I_P)$ with an arbitrary $\mu$. Then, the collaborative compressive detector is given by
\begin{equation*}
\frac{\alpha^{-1}}{\beta^{-1}}\sum\limits_{i=1}^{N} y_i^T(\phi \phi^T)^{-1}y_i +2\sum\limits_{i=1}^{N} y_i^T(\phi \phi^T)^{-1}\mu \quad \mathop{\stackrel{H_1}{\gtrless}}_{H_0} \quad \lambda
\end{equation*}
where $\lambda=(\alpha^{-1}+\beta^{-1})\left[NM\log\left(1+\frac{\alpha^{-1}}{\beta^{-1}}\right)\right]+N(\phi\mu)^T(\phi\phi^T)^{-1}\phi\mu
$.
Note that, the test statistic is of the form $\sum\limits_{i=1}^{N}\left[y_i^TAy_i+2b^Ty_i\right]$ with $A=\frac{\alpha^{-1}}{\beta^{-1}}(\phi\phi^T)^{-1}$ and $b=(\phi\phi^T)^{-1}\phi\mu$. 
In general, it is difficult to find the PDF of such an expression in a closed form. Next, we state a Lemma from~\cite{ogasawara}, which will be used to derive the distribution of the test statistic in a closed form. 
\begin{lemma}[\cite{ogasawara}]
\label{ogaswar}
Let $A$ be a symmetric matrix and $x\sim\mathcal{N}(\mu,V)$, where $V$ is positive definite (hence nonsingular). The necessary and sufficient condition that $x^TAx+2b^Tx+c$ follows a noncentral chi-squared distribution $\mathcal{X}^2_{k}(\delta)$ with $k$ degrees of freedom and noncentrality parameter $\delta$ is that

\begin{equation}
\label{oseq}
\begin{bmatrix}
    A        \\
    \hdotsfor{1} \\
    b^T 
\end{bmatrix}
V [A:b]=
\begin{bmatrix}
    A & b \\
    b^T & c
\end{bmatrix}
\end{equation}

in which case $k$ is the rank of $A$ and $\delta=\mu^TA\mu+2b^T\mu+c$. 
\end{lemma}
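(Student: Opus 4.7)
The plan is to reduce to a sum of independent squared standard normals and then match distributions directly. First, whiten by setting $z = V^{-1/2}(x - \mu)$, so that $z \sim \mathcal{N}(0, I)$. Substituting into the quadratic form gives
\begin{equation*}
x^T A x + 2 b^T x + c \;=\; z^T \tilde A z + 2 \tilde b^T z + \tilde c,
\end{equation*}
where $\tilde A = V^{1/2} A V^{1/2}$, $\tilde b = V^{1/2}(A \mu + b)$, and $\tilde c = \mu^T A \mu + 2 b^T \mu + c$. Since $V$ is nonsingular, $\mathrm{rank}(\tilde A) = \mathrm{rank}(A)$, so the question reduces to deciding when this expression is distributed as $\chi_k^2(\delta)$ for a standard Gaussian $z$.

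Next, diagonalize $\tilde A = U \Lambda U^T$ with $U$ orthogonal and rotate to $w = U^T z \sim \mathcal{N}(0, I)$ and $\tilde b' = U^T \tilde b$. The quadratic then splits as $\sum_i (\lambda_i w_i^2 + 2 \tilde b'_i w_i) + \tilde c$, a sum of independent one-dimensional pieces. Comparing the moment generating function of this sum to the $\chi_k^2(\delta)$ MGF $(1-2t)^{-k/2}\exp(\delta t/(1-2t))$, treated as rational functions of $t$, forces three conditions: (i) every nonzero $\lambda_i$ equals one, equivalently $\tilde A$ is idempotent; (ii) $\tilde b'_i = 0$ whenever $\lambda_i = 0$, equivalently $\tilde A \tilde b = \tilde b$; and (iii) after completing the square the residual constant vanishes, giving $\tilde c = \tilde b^T \tilde b$. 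Under (i)--(iii) the rotated expression equals $\sum_{i:\lambda_i = 1}(w_i + \tilde b'_i)^2$, i.e.\ a noncentral chi-squared with $\mathrm{rank}(\tilde A)$ degrees of freedom and noncentrality $\|\tilde b'\|_2^2$.

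Finally, pull (i)--(iii) back to the original data $(A, V, b, c)$ by substituting the definitions and left/right-multiplying by $V^{\pm 1/2}$. Condition (i) becomes $A V A = A$; condition (ii), after using (i) to cancel the $A\mu$ contribution, becomes $A V b = b$; and condition (iii) collapses to $c = b^T V b$. Together with the transpose of (ii) (automatic from symmetry of $A$), these are precisely the four block identities that make up~\eqref{oseq}. The parameters follow immediately: $k = \mathrm{rank}(\tilde A) = \mathrm{rank}(A)$, and $\delta = \|\tilde b'\|_2^2 = (A\mu + b)^T V (A\mu + b)$, which simplifies under (i)--(iii) back to $\mu^T A \mu + 2 b^T \mu + c$. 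The main obstacle is the MGF-matching step, in particular ruling out a nonzero shift on a zero eigenvalue: such a term would contribute a genuine Gaussian $2\tilde b'_i w_i$ taking values on all of $\mathbb{R}$, which is incompatible with the nonnegative support of any noncentral chi-squared. Handling this case cleanly is what forces condition (ii) and is the key structural observation in the argument.
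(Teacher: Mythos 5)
The paper does not prove this lemma at all: it is imported verbatim from the cited reference (Ogasawara--Takahashi), so there is no in-paper argument to compare against. Your proof is a correct, self-contained derivation of the classical result, and it is essentially the standard one: whiten via $z=V^{-1/2}(x-\mu)$ to get $z^T\tilde A z+2\tilde b^Tz+\tilde c$ with $\tilde A=V^{1/2}AV^{1/2}$, $\tilde b=V^{1/2}(A\mu+b)$, $\tilde c=\mu^TA\mu+2b^T\mu+c$; diagonalize; match moment generating functions to force $\tilde A$ idempotent, $\tilde A\tilde b=\tilde b$, $\tilde c=\tilde b^T\tilde b$; and pull these back through $V^{\pm1/2}$ to recover the four block identities $AVA=A$, $AVb=b$, $b^TVA=b^T$, $b^TVb=c$, together with $k=\mathrm{rank}(A)$ and $\delta=\tilde c$. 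Sufficiency is immediate from your completed-square form, and all the algebraic translations check out. The only place that deserves a touch more care is the necessity step: the log-MGF is not literally a rational function (it contains $\log(1-2t\lambda_i)$ branch terms as well as simple poles), so the clean way to phrase the argument is that both log-MGFs are analytic near $0$, agree there, and hence must have identical singularity sets, and that a logarithmic branch point can never cancel against a pole of $2t^2\tilde b_i'^2/(1-2t\lambda_i)$; this forces every nonzero $\lambda_i$ to equal $1$ and kills the $t^2$ term coming from $\tilde b_i'\neq 0$ on the kernel. Your support-based observation (a pure linear Gaussian term on a zero eigenvalue would give the statistic full support on $\mathbb{R}$, impossible for a distribution on $[0,\infty)$) is an equally valid and arguably more elementary way to dispose of that case. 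In short: no gap, and your write-up supplies a proof the paper omits.
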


Next, using Lemma~\ref{ogaswar} we state the following proposition.
\begin{proposition}
\label{pro1}
For a $P\times P$ symmetric and idempotent matrix $S$ and $u_i\sim\mathcal{N}(\mu,\sigma^2I_P)$, the test statistic of the form $u_i^TAu_i+2b^Tu_i+c$ with $A=\frac{1}{\sigma^2}S$, $b^T=\frac{1}{\sigma^2}z^TS$ and $c=\frac{1}{\sigma^2}z^TSz$ follows a noncentral chi-squared distribution $\mathcal{X}^2_{k}(\delta)$ where $k=\text{Rank}(S)$ and $\delta=\frac{1}{\sigma^2}\left(\mu^TS\mu+2z^TS\mu+z^TSz\right)$ for any arbitrary $P\times 1$ vector $z$.
\end{proposition}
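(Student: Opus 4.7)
The plan is to reduce Proposition~\ref{pro1} to a direct application of Lemma~\ref{ogaswar} by verifying the block-matrix identity~\eqref{oseq} for the stated choices of $A$, $b$, and $c$. Since $u_i \sim \mathcal{N}(\mu, \sigma^2 I_P)$, we have $V = \sigma^2 I_P$, which is positive definite, and $A = S/\sigma^2$ is symmetric because $S$ is symmetric. Writing $b = \frac{1}{\sigma^2} S z$ (using $S^T = S$), all the ingredients required by the lemma are in place, so what remains is the algebraic check.

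Next, I would plug these expressions into~\eqref{oseq} and evaluate the four blocks. Idempotency of $S$ is the key structural fact that makes them collapse cleanly: $AVA = \frac{1}{\sigma^2} S \cdot \sigma^2 I_P \cdot \frac{1}{\sigma^2} S = \frac{1}{\sigma^2} S^2 = \frac{1}{\sigma^2} S = A$; the two cross blocks $AVb$ and $b^T V A$ reduce similarly to $b$ and $b^T$; and the scalar block becomes $b^T V b = \frac{1}{\sigma^2} z^T S^2 z = \frac{1}{\sigma^2} z^T S z = c$. In every case the $\sigma^2$ factors cancel against $V = \sigma^2 I_P$ and $S^2 = S$ absorbs the remaining matrix products, so~\eqref{oseq} holds.

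Once the block identity is verified, Lemma~\ref{ogaswar} directly yields that $u_i^T A u_i + 2 b^T u_i + c \sim \mathcal{X}_k^2(\delta)$ with $k = \text{Rank}(A) = \text{Rank}(S)$ (rescaling by $1/\sigma^2$ does not change rank) and $\delta = \mu^T A \mu + 2 b^T \mu + c$. Substituting the given expressions for $A$, $b$, and $c$ yields the stated noncentrality parameter $\delta = \frac{1}{\sigma^2}(\mu^T S \mu + 2 z^T S \mu + z^T S z)$. There is no real obstacle in this argument; the only conceptual point worth highlighting is that the combined symmetric-and-idempotent structure of $S$ is exactly what makes~\eqref{oseq} hold for an \emph{arbitrary} vector $z$, so the conclusion is uniform in $z$ rather than contingent on a particular choice.
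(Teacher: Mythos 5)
Your proposal is correct and follows essentially the same route as the paper's own proof: both verify condition~\eqref{oseq} of Lemma~\ref{ogaswar} by a direct block-by-block computation in which $V=\sigma^2 I_P$ cancels the $1/\sigma^2$ factors and $S^2=S$ collapses the products, then read off $k=\text{Rank}(S)$ and $\delta=\mu^TA\mu+2b^T\mu+c$. Your explicit remark that $\text{Rank}(A)=\text{Rank}(S)$ under rescaling is a minor point the paper leaves implicit, but the argument is the same.
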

\begin{proof}
To prove the proposition, it is sufficient to show that the above mentioned $A$, $b$ and $c$ satisfy condition~\eqref{oseq} in Lemma~\ref{ogaswar} for any arbitrary $P\times 1$ vector $z$. Notice that, $S$ satisfies the following properties: symmetric $S^T=S$ and idempotent $S^2=S$. Thus, 
\begin{eqnarray*}
\begin{bmatrix}
    A        \\
    \hdotsfor{1} \\
    b^T 
\end{bmatrix}
V [A:b]&=&
\begin{bmatrix}
    \frac{1}{\sigma^2}S        \\
    \hdotsfor{1} \\
    \frac{1}{\sigma^2}z^TS 
\end{bmatrix}\left[\sigma^2I_P\right] \left[\frac{1}{\sigma^2}S\;\vdots\;\frac{1}{\sigma^2}Sz \right]
\\
&=&
\begin{bmatrix}
    \hat{P}        \\
    \hdotsfor{1} \\
    z^TS
\end{bmatrix} \left[\frac{1}{\sigma^2}S\;\vdots\;\frac{1}{\sigma^2}Sz \right]
\\
&=&
\begin{bmatrix}
    \frac{1}{\sigma^2}SS  &  \frac{1}{\sigma^2}SSz   \\
    \frac{1}{\sigma^2}z^TSS & \frac{1}{\sigma^2}z^TSS z
\end{bmatrix}  
\\
&=&
\begin{bmatrix}
    A & b \\
    b^T & c
\end{bmatrix}
\end{eqnarray*}
Thus, the test statistic follows a noncentral chi-squared distribution $\mathcal{X}^2_{k}(\delta)$ where $k=\text{Rank}(S)$ and $\delta=\frac{1}{\sigma^2}\left(\mu^TS\mu+2z^TS\mu+z^TSz\right)$ for any arbitrary $z$.
\end{proof}

Using these results, the collaborative compressive detector reduces to:
\begin{equation*}
\frac{\alpha^{-1}}{\beta^{-1}}\sum\limits_{i=1}^{N} y_i^T(\phi \phi^T)^{-1}y_i +2\sum\limits_{i=1}^{N} y_i^T(\phi \phi^T)^{-1}\mu \quad \mathop{\stackrel{H_1}{\gtrless}}_{H_0} \quad \lambda.
\end{equation*}
Using the fact that $y_i=\phi u_i$ and rearranging the terms, we get
\begin{equation*}
\sum\limits_{i=1}^{N} \left[u_i^T\hat{P}u_i +2\frac{\beta^{-1}}{\alpha^{-1}} \mu^T\hat{P}u_i+\left(\frac{\beta^{-1}}{\alpha^{-1}}\right)^2\mu^T\hat{P}\mu\right] \quad \mathop{\stackrel{H_1}{\gtrless}}_{H_0} \quad \tau
\end{equation*}
 where $\hat{P}=\phi^T(\phi\phi^T)^{-1}\phi$ and $\tau=\frac{\beta^{-1}}{\alpha^{-1}}\lambda+N\left(\frac{\beta^{-1}}{\alpha^{-1}}\right)^2\mu^T\hat{P}\mu$.
Note that, the FC does not have access to $u_i$ and the above test statistic is used only for deriving the PDF of the original test statistic.

\begin{figure*}[t]
\centering
\subfigure[]{
\includegraphics[%
  width=0.4\textwidth,clip=true]{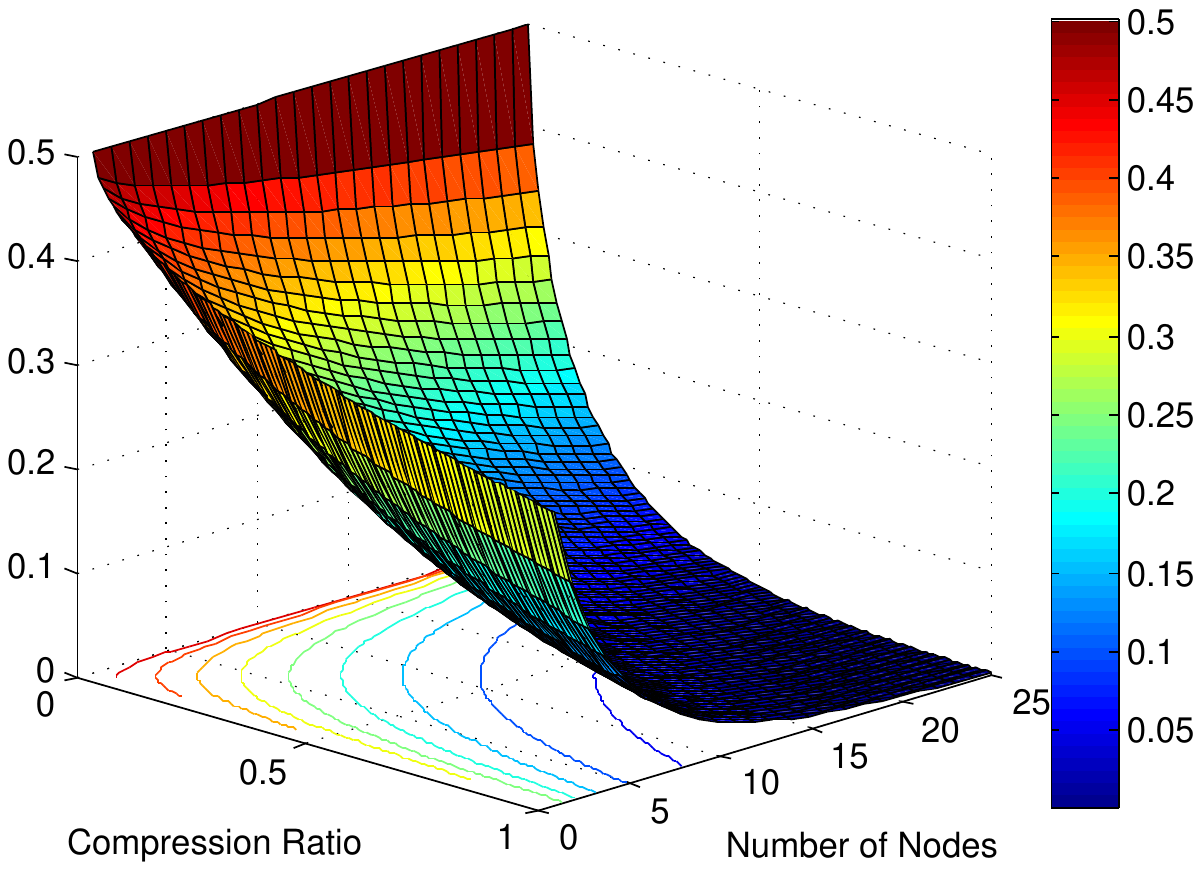}
\label{PeR}}
\hspace{0.1in}
\subfigure[] {
\includegraphics[%
  width=0.4\textwidth,clip=true]{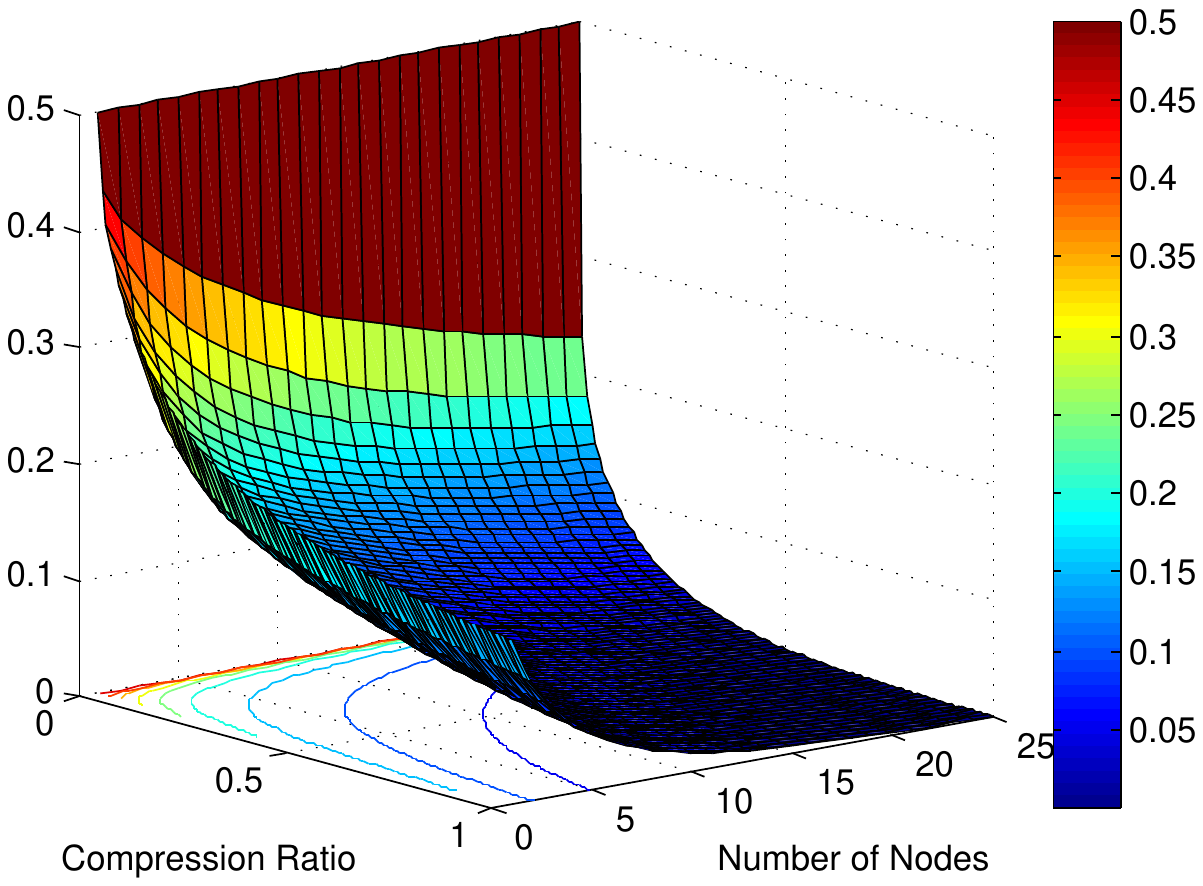}
\label{PeRm} }
\caption{Prob of error ($P_e$) analysis when $(\alpha^{-1},\beta^{-1})=(1,20)$ and $P=100$.  \subref{PeR} $P_e$ with varying ($(c,\;N)$) when $\mu=0$. \subref{PeRm} $P_e$ with varying ($(c,\;N)$) when $\mu=10^{-3}$.}
\label{PeRandom}
\end{figure*}

\begin{theorem}
\label{therm}
For a projection matrix $\hat{P}=\phi^T(\phi\phi^T)^{-1}\phi$ and $u_i\sim\mathcal{N}(\mu,\sigma_k^2I_P)$ under the hypothesis $H_k$, the test statistic 
$$\Lambda(\mathbf{y})=\sum\limits_{i=1}^{N} \left[u_i^T\hat{P}u_i +2\frac{\beta^{-1}}{\alpha^{-1}} \mu^T\hat{P}u_i+\left(\frac{\beta^{-1}}{\alpha^{-1}}\right)^2\mu^T\hat{P}\mu\right]$$
has the following distribution 
\[\frac{\Lambda(\mathbf{y})}{\sigma_k^2} \sim \left\{\begin{array}{rll}
				\mathcal{X}^2_{NM}(\delta_0),  & \mbox{under}\ H_{0} \\
				\mathcal{X}^2_{NM}(N\delta_1)  & \mbox{under}\ H_{1}
				\end{array}\right. 
\] 
where $\sigma_0^2=\beta^{-1}$, $\sigma_1^2=\alpha^{-1}+\beta^{-1}$ and $\mathcal{X}^2_{NM}(\delta_k)$ denotes noncentral chi-square distribution with $NM$ degrees of freedom and parameters $\delta_0=0$ and $\delta_1=\frac{||\hat{P}\mu||_2^2}{\alpha^{-1}}\left(1+\frac{\beta^{-1}}{\alpha^{-1}}\right)$. 
\end{theorem}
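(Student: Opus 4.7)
The plan is to decompose $\Lambda(\mathbf{y})$ into $N$ i.i.d.\ summands, apply Proposition~\ref{pro1} to each, and invoke the standard additivity of independent noncentral chi-squared variates. Since the $u_i$ are i.i.d.\ under each hypothesis, once the per-summand distribution is settled the $N$-term distribution is essentially free.

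First I would rescale each summand by $\sigma_k^2$ and put it in the template of Proposition~\ref{pro1},
\[
\tfrac{1}{\sigma_k^2}\!\left[u_i^T\hat P u_i + 2\tfrac{\beta^{-1}}{\alpha^{-1}}\mu^T\hat P u_i + \left(\tfrac{\beta^{-1}}{\alpha^{-1}}\right)^{\!2}\mu^T\hat P\mu\right] = u_i^T A u_i + 2 b^T u_i + c,
\]
with the identifications $S=\hat P$, $z = (\beta^{-1}/\alpha^{-1})\mu$, $A=S/\sigma_k^2$, $b = Sz/\sigma_k^2$, and $c = z^T S z/\sigma_k^2$. The hypotheses of Proposition~\ref{pro1} hold because $\hat P = \phi^T(\phi\phi^T)^{-1}\phi$ is symmetric ($\hat P^T = \hat P$) and idempotent ($\hat P^2 = \phi^T(\phi\phi^T)^{-1}\phi\phi^T(\phi\phi^T)^{-1}\phi = \hat P$). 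Proposition~\ref{pro1} then concludes that the rescaled summand is $\mathcal{X}^2_{k_0}(\delta^{(i)})$-distributed, with $k_0 = \mathrm{rank}(\hat P) = M$ (taking $\phi$ to have full row rank, the usual setting for an $M \times P$ random projection with $M<P$) and
\[
\delta^{(i)} = \tfrac{1}{\sigma_k^2}\bigl(\bar\mu_k^T\hat P \bar\mu_k + 2 z^T\hat P\bar\mu_k + z^T\hat P z\bigr),
\]
where $\bar\mu_k$ is the conditional mean of $u_i$ under $H_k$ (so $\bar\mu_0=0$, $\bar\mu_1=\mu$).

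Next I would substitute $\bar\mu_k$ into the formula above. Under $H_1$ the three inner products collapse to $(1+\beta^{-1}/\alpha^{-1})^{2}\|\hat P\mu\|_2^2$, and dividing by $\sigma_1^2 = \alpha^{-1}+\beta^{-1}$ produces exactly the claimed $\delta_1 = \frac{\|\hat P\mu\|_2^2}{\alpha^{-1}}\bigl(1+\frac{\beta^{-1}}{\alpha^{-1}}\bigr)$; under $H_0$ the cross terms vanish because $\bar\mu_0 = 0$. Independence of the $u_i$ makes the $N$ rescaled summands i.i.d.\ $\mathcal{X}^2_M(\delta^{(i)})$ variates, and the standard additivity property that a sum of $N$ independent $\mathcal{X}^2_M(\delta^{(i)})$'s is $\mathcal{X}^2_{NM}(N\delta^{(i)})$ closes the argument. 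The only genuine obstacle is bookkeeping: one must identify $(A,b,c)$---and in particular choose the right $z$---so that Proposition~\ref{pro1}'s noncentrality formula unwinds into the stated closed form with the extra factor $(1+\beta^{-1}/\alpha^{-1})$; the symmetric/idempotent verification, the rank count, and the additivity invocation are all routine.
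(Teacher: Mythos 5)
Your route is the same as the paper's: split $\Lambda(\mathbf{y})$ into the $N$ i.i.d.\ summands $\Lambda(y_i)$, identify $A=\hat P/\sigma_k^2$, $b=\hat P z/\sigma_k^2$, $c=z^T\hat P z/\sigma_k^2$ with $z=(\beta^{-1}/\alpha^{-1})\mu$, check that $\hat P$ is symmetric and idempotent of rank $M$ so Proposition~\ref{pro1} applies, and then add up $N$ independent $\mathcal{X}^2_M$ variates. Your $H_1$ computation is correct and in fact more explicit than the paper's, which stops at ``the result can be derived'': the three terms do collapse to $(1+\beta^{-1}/\alpha^{-1})^2\|\hat P\mu\|_2^2$ and dividing by $\sigma_1^2=\alpha^{-1}(1+\beta^{-1}/\alpha^{-1})$ gives the stated $\delta_1$.

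The one genuine gap is your treatment of $H_0$. Your own noncentrality formula, evaluated at $\bar\mu_0=0$, gives
\[
\delta^{(0)}=\frac{1}{\sigma_0^2}\,z^T\hat P z=\frac{(\beta^{-1}/\alpha^{-1})^2\|\hat P\mu\|_2^2}{\beta^{-1}},
\]
because the summand is the complete square $(u_i+z)^T\hat P(u_i+z)$ and $\hat P(u_i+z)$ has nonzero mean $\hat P z$ under $H_0$ whenever $\mu\neq 0$. Saying ``the cross terms vanish'' kills only the $2z^T\hat P\bar\mu_0$ term; the constant $z^T\hat P z$ survives, so $\delta_0=0$ does not follow from your argument (it holds only for $\mu=0$, i.e., the zero-mean special case). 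To be fair, the theorem as stated asserts $\delta_0=0$ and the paper's proof is silent on this point, so you have inherited rather than introduced the discrepancy --- but a complete proof must either carry the nonzero $\delta^{(0)}$ through or state explicitly why it is being dropped.
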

\begin{proof}
Let us denote the test static by $\Lambda(\mathbf{y})=\sum\limits_{i=1}^{N}\Lambda({y_i})$ with 

$$\Lambda({y_i})=\left[u_i^T\hat{P}u_i +2\frac{\beta^{-1}}{\alpha^{-1}} \mu^T\hat{P}u_i+\left(\frac{\beta^{-1}}{\alpha^{-1}}\right)^2\mu^T\hat{P}\mu\right].$$ 

Now notice that, $\frac{\Lambda({y_i})}{\sigma_k^2}$ is of the form $u_i^TAu_i+2b^Tu_i+c$ with $A=\frac{\hat{P}}{\sigma_k^2}$, $b^T=\frac{z^T\hat{P}}{\sigma_k^2}$, $z=\frac{\beta^{-1}}{\alpha^{-1}}\mu$ and $c=\frac{z^T\hat{P}z}{\sigma_k^2}$. Also note that, the projection matrix $\hat{P}=\phi^T(\phi\phi^T)^{-1}\phi$ is both symmetric and idempotent with rank$(\hat{P})=M$. As a result, using Proposition~\ref{pro1} and the fact that $\frac{\Lambda(\mathbf{y})}{\sigma_k^2}$ is the sum of $N$ I.I.D. chi-squared random variables $\frac{\Lambda(y_i)}{\sigma_k^2}$, the result in the Theorem~\ref{therm} can be derived.
\end{proof}

If $NM$ is large enough, then, the following approximations hold

\[ \frac{\Lambda(\mathbf{y})}{\sigma_k^2} \sim \left\{\begin{array}{rll}
				\mathcal{N}(NM,2NM),  & \mbox{under}\ H_{0} \\
			    \mathcal{N}((NM+N\delta_1),2(NM+N\delta_1)  & \mbox{under}\ H_{1}
				\end{array}\right. 
\]
where $\delta_1=\frac{||\hat{P}\mu||_2^2}{\alpha^{-1}}\left(1+\frac{\beta^{-1}}{\alpha^{-1}}\right)$. 
As a result, we have
$$\qquad P_F=P\left(\frac{\Lambda(\mathbf{y})}{\beta^{-1}}>\frac{\tau}{\beta^{-1}}|H_0\right)=Q\left(\frac{\frac{\tau}{\beta^{-1}}-NM}{\sqrt{2NM}}\right)$$ and
\begin{eqnarray*}
P_D&=& P\left(\frac{\Lambda(\mathbf{y})}{\alpha^{-1}+\beta^{-1}}>\frac{\tau}{\alpha^{-1}+\beta^{-1}}|H_1\right)\\
&=& Q\left(\frac{\frac{\tau}{\alpha^{-1}+\beta^{-1}}-NM-N\delta_1}{\sqrt{2(NM+N\delta_1)}}\right)
\end{eqnarray*}
where $\tau=\frac{\beta^{-1}}{\alpha^{-1}}\lambda+N\left(\frac{\beta^{-1}}{\alpha^{-1}}\right)^2||\hat{P}\mu||_2^2$, $\lambda=(\alpha^{-1}+\beta^{-1})\left[NM\log\left(1+\frac{\alpha^{-1}}{\beta^{-1}}\right)\right]+N||\hat{P}\mu||_2^2$ and $\delta_1=\frac{||\hat{P}\mu||_2^2}{\alpha^{-1}}\left(1+\frac{\beta^{-1}}{\alpha^{-1}}\right)$. 

The detection performance of the system is a function of the projection operator $\hat{P}$.  
Next, we provide approximations to the performance of the collaborative compressive detector using the concept of $\epsilon$-stable embedding of the mean $\mu$.

\begin{theorem}
\label{thr}
Suppose that $\sqrt{\frac{P}{M}}\hat{P}$ provides an $\epsilon$-stable embedding of $(\mathcal{U},\{0\})$. Then for any random signal $s\sim\mathcal{N}(\mu,\alpha^{-1}I_P)$ with $\mu\in \mathcal{U}$, the probability of error of the collaborative compressive detector can be approximated as
\begin{equation*}
P_E= \dfrac{1}{2} Q\left(\sqrt{cN}\tau_0\right)+\dfrac{1}{2} \left(Q\left(\sqrt{cN}\tau_1\right)\right)
\end{equation*}
where $\tau_0=\sqrt{\frac{P}{2}}\left(\left(1+\tau^{-1}\right)\left(\log\left(1+\tau\right)+\frac{\|\mu\|_2^2}{\alpha^{-1}P}\right)-1\right)$, $\tau_1=\sqrt{\frac{P+\delta_1'}{2}}\left(1-\frac{\tau^{-1}\left(P\log(1+\tau)+\frac{\|\mu\|_2^2}{\alpha^{-1}}\right)}{P+\delta_1'}\right)$ with $\delta_1'=\frac{\|\mu\|_2^2}{\alpha^{-1}}(1+\tau^{-1})$ and $\tau=\frac{\alpha^{-1}}{\beta^{-1}}$.
\end{theorem}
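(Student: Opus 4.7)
The plan is to work directly from the Gaussian approximations of $P_F$ and $P_D$ that were derived (via Theorem~\ref{therm}) immediately before the statement, and then feed the $\epsilon$-stable embedding condition in through the single quantity that depends on the projection, namely $\|\hat{P}\mu\|_2^2$. Since Definition~\ref{stableembedding} applied to $(\mathcal{U},\{0\})$ gives $(1\pm\epsilon)\|\mu\|_2^2$ as bounds on $\frac{P}{M}\|\hat{P}\mu\|_2^2$, I treat the ``approximation'' in the theorem as the clean substitution $\|\hat{P}\mu\|_2^2 \approx \frac{M}{P}\|\mu\|_2^2 = c\|\mu\|_2^2$; the whole algebraic content of the proof is then bookkeeping to match the claimed $\tau_0,\tau_1$.

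First I would rewrite the threshold $\tau_{\text{thresh}}$ (the one appearing inside $P_F$, $P_D$) in a factored form. Using $\alpha^{-1}+\beta^{-1}=(1+\tau)\beta^{-1}$ with $\tau=\alpha^{-1}/\beta^{-1}$ and the identity $\mu^T\hat{P}\mu = \|\hat{P}\mu\|_2^2$ (since $\hat{P}$ is symmetric and idempotent), one obtains
\begin{equation*}
\frac{\tau_{\text{thresh}}}{\beta^{-1}} \;=\; \frac{1+\tau}{\tau}\Bigl[NM\log(1+\tau)+\frac{N\|\hat{P}\mu\|_2^2}{\alpha^{-1}}\Bigr],
\end{equation*}
and dividing by $(\alpha^{-1}+\beta^{-1})/\beta^{-1}=(1+\tau)$ gives the analogous form for $\tau_{\text{thresh}}/(\alpha^{-1}+\beta^{-1})$. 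Plugging in the stable-embedding replacement $\|\hat{P}\mu\|_2^2\to c\|\mu\|_2^2$ and noting that $\delta_1 = \frac{\|\hat{P}\mu\|_2^2}{\alpha^{-1}}(1+\tau^{-1}) \approx c\,\delta_1'$ reduces every quantity to expressions in $P$, $M$, $N$, $\tau$, and $\|\mu\|_2^2/\alpha^{-1}$.

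For $P_F$, the argument of the $Q$-function becomes $\sqrt{NM/2}\bigl[(1+\tau^{-1})(\log(1+\tau)+\|\mu\|_2^2/(P\alpha^{-1}))-1\bigr]$; factoring $\sqrt{NM/2}=\sqrt{cN}\,\sqrt{P/2}$ immediately identifies this with $\sqrt{cN}\,\tau_0$. For $P_D$, I use $NM+N\delta_1=cN(P+\delta_1')$ so the denominator becomes $\sqrt{cN}\sqrt{2(P+\delta_1')}$; the numerator simplifies (the $\|\mu\|_2^2/(\tau P\alpha^{-1})$ contributions from the threshold and from $N\delta_1$ cancel) to $-cN\bigl[P+\|\mu\|_2^2/\alpha^{-1}-(P/\tau)\log(1+\tau)\bigr]$, so $P_D=Q(-\sqrt{cN}\,\tau_1)$, i.e.\ $1-P_D = Q(\sqrt{cN}\,\tau_1)$. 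Substituting into $P_E=\tfrac12 P_F+\tfrac12(1-P_D)$ yields the claimed expression.

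The main obstacle is not conceptual but bookkeeping: keeping track of the two different meanings of $\tau$ (the threshold versus the SNR-like ratio $\alpha^{-1}/\beta^{-1}$), and verifying the cancellations in the $P_D$ numerator that produce the clean form $P+\|\mu\|_2^2/\alpha^{-1}-\tau^{-1}P\log(1+\tau)$ matching $(P+\delta_1')-\tau^{-1}(P\log(1+\tau)+\|\mu\|_2^2/\alpha^{-1})$. A minor secondary concern is justifying that the stable-embedding inequalities can be replaced by equality; this is where the theorem's wording ``can be approximated as'' is essential, and one could alternatively state two-sided bounds analogous to Theorem~\ref{th1} by retaining $\sqrt{1\pm\epsilon}$ factors throughout.
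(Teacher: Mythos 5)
Your proposal is correct and follows essentially the same route as the paper: invoke the $\epsilon$-stable embedding to replace $\|\hat{P}\mu\|_2^2$ by $c\|\mu\|_2^2$, substitute into the Gaussian approximations of $P_F$ and $P_D$ obtained from Theorem~\ref{therm}, and combine via $P_E=\tfrac12 P_F+\tfrac12(1-P_D)$ with $Q(x)=1-Q(-x)$. The paper's proof states only these steps and omits the algebra; your verification of the factorizations and of the cancellation yielding $P+\|\mu\|_2^2/\alpha^{-1}-\tau^{-1}P\log(1+\tau)$ in the $P_D$ numerator is accurate and fills in exactly what the paper leaves implicit.
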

\begin{proof}
By our assumption that $\sqrt{\frac{P}{M}}\hat{P}$ provides an
$\epsilon$-stable embedding of $(\mathcal{U},\{0\})$, we know that
\begin{equation}
\sqrt{1-\epsilon}\; \|\mu\|_2 \leq \sqrt{\frac{P}{M}}\;\|\hat{P}\mu\|_2\leq \sqrt{1+\epsilon}\; \|\mu\|_2.
\end{equation}
In other words, for large values of $NM$ the following approximation holds: $|\hat{P}\mu\|_2^2\approx\frac{M}{P}\|\mu\|_2^2=c\|\mu\|_2^2$. 
The proof follows from the fact that $Q(x)=1-Q(-x)$ and by plugging in
\begin{equation}
P_F = Q\left(\sqrt{cN}\tau_0\right) 
\end{equation}
and
\begin{equation}
P_D = Q\left(-\sqrt{cN}\tau_1\right)
\end{equation}
in the equation $P_E=\dfrac{1}{2} P_F+\dfrac{1}{2} (1-P_D)$,
the above results can be derived.
\end{proof}
Note that, $(-\tau_1)\leq\tau_0$ and, therefore, $P_D\geq P_F$. Similar to the deterministic signal case, if $N\geq c^{-1}$ the loss due to compression with a single node can be recovered in collaborative compressive detection for the random signal case as well.
For a fixed $M$, as the number of collaborating nodes approaches infinity, i.e., $N\rightarrow\infty$, the probability of error vanishes. We would like to point out that by plugging in $\mu=0$ in the above expressions, results for the zero mean signal case can be derived (which are consistent with~\cite{thak,gen}). 

To gain insights into Theorem~\ref{thr}, we present illustrative examples that corroborate our results. In Figure~\ref{PeR} we plot the probability of error $P_E$ as a function of the number of nodes $N$ and compression ratio $c$. We assume that the signal of interest is $s\sim\mathcal{N}(0,I_P)$ and noise $v_i\sim\mathcal{N}(0,20 I_P)$, with original length of the signal being $P=100$. It can be seen from the figure that $P_E$ is a monotonically decreasing function of $(c,N)$. 
In Figure~\ref{PeRm}, we plot the probability of error $P_E$ as a function of $(c,N)$ when the signal of interest is $s\sim\mathcal{N}(\mu,I_P)$ with $\|\mu\|_2^2=10^{-3}$. Similar to Figure~\ref{PeRm}, $P_E$ decreases monotonically with $(c,N)$, however, with a much faster rate. 
In order to more clearly illustrate this behavior of $P_E$, we also establish the following corollary of Theorem~\ref{thr}.

\begin{corollary}
\label{cr}
Suppose that $\sqrt{\frac{P}{M}}\hat{P}$ provides an $\epsilon$-stable embedding of $(\mathcal{U},\{0\})$. Then for any random signal $s\sim\mathcal{N}(\mu,\alpha^{-1}I_P)$ with $\mu\in \mathcal{U}$, the error probability $P_E$ of the collaborative compressive detector satisfies 
\begin{equation*}
P_E \leq \frac{1}{4}\exp\left(-\frac{cN}{2}\tau_0^2\right)+\frac{1}{4}\exp\left(-\frac{cN}{2}\tau_1^2\right).
\end{equation*}
\end{corollary}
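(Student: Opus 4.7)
The plan is to derive the bound as a direct Chernoff-type relaxation of the exact expression for $P_E$ already established in Theorem~\ref{thr}. Recall that Theorem~\ref{thr} gives
$$P_E = \tfrac{1}{2} Q\bigl(\sqrt{cN}\,\tau_0\bigr) + \tfrac{1}{2} Q\bigl(\sqrt{cN}\,\tau_1\bigr),$$
so the task reduces entirely to bounding each Gaussian tail term. The key analytic tool I would invoke is the standard Chernoff bound for the $Q$-function, namely $Q(x) \leq \tfrac{1}{2}\exp(-x^2/2)$ for $x \geq 0$, which follows from the inequality $e^{-u^2/2}\leq e^{-xu/2}e^{-x^2/4}\cdot e^{-x^2/4}$ type manipulation, or more simply from the monotone bound $Q(x)\le \tfrac{1}{2}e^{-x^2/2}$ that is immediate by completing the square inside the defining integral of $Q$.

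First, I would verify non-negativity of $\tau_0$ and $\tau_1$ in the operating regime of the theorem; this is required so that the Chernoff bound is applicable with $x = \sqrt{cN}\,\tau_k$. Inspection of the expressions for $\tau_0$ and $\tau_1$ (in particular the inequality $\log(1+\tau)\ge \tau/(1+\tau)$ which ensures $\tau_0\ge 0$, and the fact that $\delta_1' \ge 0$ keeps $\tau_1$ non-negative in the relevant parameter range) confirms this, so the Chernoff bound applies termwise.

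Applying $Q(\sqrt{cN}\,\tau_k) \leq \tfrac{1}{2}\exp(-cN\tau_k^{2}/2)$ for $k=0,1$ and plugging into the expression for $P_E$ gives
$$P_E \leq \tfrac{1}{2}\cdot\tfrac{1}{2}\exp\!\Bigl(-\tfrac{cN}{2}\tau_0^{2}\Bigr) + \tfrac{1}{2}\cdot\tfrac{1}{2}\exp\!\Bigl(-\tfrac{cN}{2}\tau_1^{2}\Bigr),$$
which is exactly the claimed corollary. The main obstacle, and the only non-mechanical step, is the sign check on $\tau_0$ and $\tau_1$; once that is in place the bound is a one-line substitution. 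I would flag this check explicitly at the beginning of the proof so that the reader sees why the Chernoff form $Q(x)\le\tfrac12 e^{-x^2/2}$ (which strictly requires $x\ge 0$) is being applied legitimately, and would note that the resulting exponential-decay rates in $cN$ directly generalize Corollary~\ref{c2} to the random-signal regime.
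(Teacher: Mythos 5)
Your proof is correct and follows essentially the same route as the paper: verify $\tau_0,\tau_1>0$ and then apply the Chernoff bound $Q(x)\le\tfrac{1}{2}e^{-x^2/2}$ termwise to the expression for $P_E$ from Theorem~\ref{thr}. One small imprecision: the positivity of $\tau_1$ does not follow from $\delta_1'\ge 0$ alone but requires the upper half of the logarithm inequality, $\log(1+\tau)<\tau$ (the paper reduces both sign checks to the two-sided bound $\tfrac{\tau}{1+\tau}<\log(1+\tau)<\tau$ after first noting $\tau_k\ge\tau_k(\mu=0)$), though this is an elementary fix that does not affect the argument.
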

\begin{proof}
To prove the corollary, we first show that both $\tau_0$ and $\tau_1$ as given in Theorem~\ref{thr} are positive. Let us  denote by $\tau_k(\mu=0)$ the expression when $\mu=0$ is plugged in the expression for $\tau_k$ for $k\in\{0,1\}$. Then, it can be shown that $\tau_k\geq \tau_k(\mu=0)$ for $k\in\{0,1\}$. Now, a sufficient condition for $\tau_k>0$ is $\tau_k(\mu=0)>0$ for $k\in\{0,1\}$.
The condition for $\tau_0(\mu=0)>0$ and $\tau_1(\mu=0)>0$ to be true can be written as
\begin{equation*}
\frac{1}{1+\frac{\beta^{-1}}{\alpha^{-1}}}<\log\left(1+\frac{\alpha^{-1}}{\beta^{-1}}\right)<\frac{\alpha^{-1}}{\beta^{-1}}.
\end{equation*}
The above condition can be shown to be true by applying the logarithm inequality $\frac{\tau}{1+\tau}<\log(1+\tau)<\tau$ with $\tau=\frac{\alpha^{-1}}{\beta^{-1}}$.
Now using the Chernoff bound (i.e., $Q(x)\leq\frac{1}{2}\exp(-\frac{x^2}{2})$ for $x>0$), it can be shown that
\begin{equation*}
P_E \leq \frac{1}{4}\exp\left(-\frac{cN}{2}\tau_0^2\right)+\frac{1}{4}\exp\left(-\frac{cN}{2}\tau_1^2\right).
\end{equation*}
The above expression suggests that $P_E$ vanishes exponentially fast as we increase either the compression ratio $c$ or the number of collaborating nodes $N$. 
\end{proof}

Next, we consider the problem where the network operates in the presence of an eavesdropper who wants to discover the state of the nature being monitored by the system.
The FC's goal is to implement the appropriate countermeasures to keep  the data regarding the presence of the phenomenon secret from the eavesdropper. 

\section{Collaborative Compressive Detection in the Presence of an Eavesdropper}
\label{sec4}
In a collaborative compressive detection framework, the FC receives compressed observation vectors, $\mathbf{y}=[y_1,\cdots,y_N]$, from the nodes and makes the global decision about the presence of the random signal vector\footnote{In rest of the paper, we will consider only the random signal detection case. Deterministic signal can be seen as a special case of random signal $s$ with variance $\alpha^{-1}=0$ and results for the deterministic signal case can be obtained by plugging in $\alpha^{-1}=0$ in corresponding expressions for the random signal case.} $s \sim \mathcal{N} (\mu,\alpha^{-1} I_P)$ with $\mu\neq 0$. The transmission of the
nodes, however, may be observed by an eavesdropper who also wants to discover the state of the phenomenon (see Figure~\ref{Fig: modelE}). To keep the data regarding the presence of the phenomenon secret from the eavesdropper, we propose to use cooperating trustworthy nodes that assist the FC by injecting artificial noise to mislead the eavesdroppers to improve the security performance of the system. 

\begin{figure}[!t]
\centering
    \includegraphics[width=2.75in]{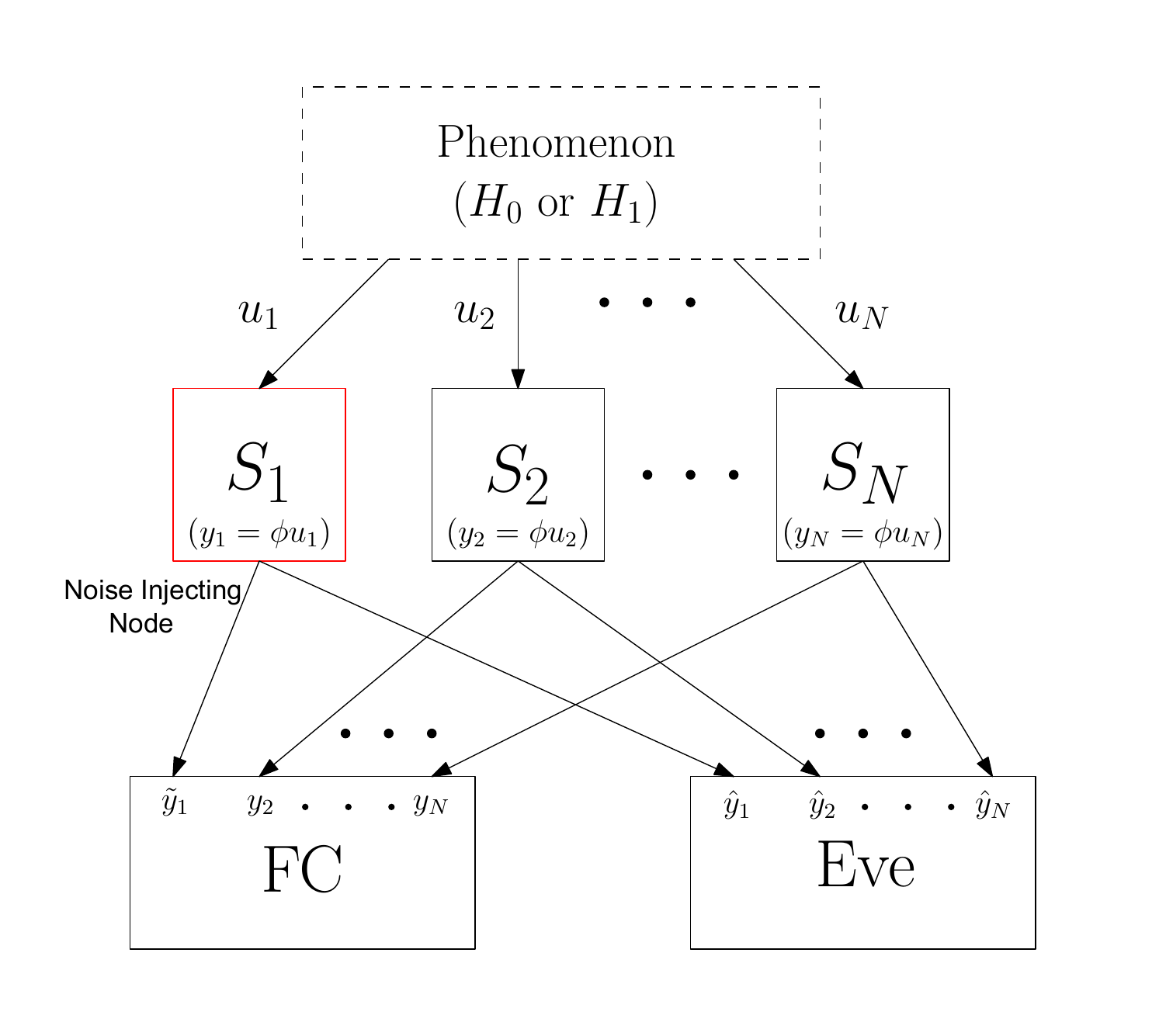}
\vspace{-0.2in}
    \caption{Collaborative Compressive Detection Network in the Presence of an Eavesdropper}
    \label{Fig: modelE}
\end{figure}

\subsection{Artificial Noise Injection Model}
It is assumed that $B$ out of $N$ nodes (or $\alpha$ fraction of the nodes) inject artificial noise according to the model given next.
Nodes tamper their data $y_i$ and send $\tilde{y_i}$ in the following manner: 

\noindent Under $H_0$: 

\[ \tilde{y_{i}} =  \left\{ \begin{array}{rll}
				 \phi(v_{i}+W_{i})  & \mbox{with probability}\ P_{1}^0 \\
				 \phi(v_{i}-W_{i})  & \mbox{with probability}\ P_{2}^0 \\
			     \phi v_{i}  & \mbox{with probability}\ (1-P_1^0-P_2^0)\\
				\end{array}\right.  
\]

\noindent Under $H_1$: 

\[ \tilde{y_{i}} =  \left\{ \begin{array}{rll}
				 \phi(s+v_{i}+W_{i})  & \mbox{with probability}\ P_{1}^1 \\
				 \phi(s+v_{i}-W_{i})  & \mbox{with probability}\ P_{2}^1 \\
			     \phi(s+v_{i})  & \mbox{with probability}\ (1-P_1^1-P_2^1)\\
				\end{array}\right.  
\]
\vspace{0.075in}
where the signal $s$ is assumed to be distributed as $s \sim \mathcal{N}(\mu,\alpha^{-1}I_P)$ and
$W_{i}$ is the artificial noise injected in the system which is distributed as AWGN $W_i\sim \mathcal{N}(D_i,\gamma^{-1}I_P)$ with $D_i=\kappa\mu$. The parameter $\kappa>0$ represents the artificial noise strength, which is zero for non artificial noise injecting nodes. 
Also note that, the values of $(P_1^0, P_2^0)$ and $(P_1^1, P_2^1)$ are system dependent. For example, under the assumption that the noise injecting nodes have perfect knowledge of the hypothesis, we have $P_1^0=1$ and $P_2^1=1$. In other scenarios, values of $(P_1^0, P_2^0)$ and $(P_1^1, P_2^1)$ are constrained by the local detection capability of the nodes. However, it is reasonable to assume that $(P_1^0 > P_2^0)$ and $(P_1^1<P_2^1)$ because under hypothesis $H_0$ the tampered value should be high and under $H_1$ the tampered value should be low to degrade the performance at the eavesdropper.
We assume that the observation model and artificial noise parameters (i.e., $\kappa\; \text{and}\; \gamma^{-1}$) are known to both the FC and the eavesdropper. The only information unavailable at the eavesdropper is the identity of the noise injecting nodes (Byzantines) and considers each node $i$ to be Byzantine with a certain probability $\alpha$. 

\subsection{Binary Hypothesis Testing}

The FC can distinguish between $y_i$ and $\tilde{y_i}$. Notice that, $\tilde{y_i}$ is distributed under the hypothesis $H_0$ as a multivariate Gaussian mixture $\mathcal{N}(P_k^0,\tilde{\mu_0}^i,\tilde{\Sigma_0}^i)$ which comes from $\mathcal{N}(\phi D_i,(\gamma^{-1}+\beta^{-1})\phi\phi^T)$ with probability $P_1^0$, from $\mathcal{N}(-\phi D_i,(\gamma^{-1}+\beta^{-1})\phi\phi^T)$ with probability $P_2^0$ and from $\mathcal{N}(0,(\gamma^{-1}+\beta^{-1})\phi\phi^T)$ with probability $(1-P_1^0-P_2^0)$. Similarly, under the hypothesis $H_1$ it is distributed as multivariate Gaussian mixture $\mathcal{N}(P_k^1,\tilde{\mu_1}^i,\tilde{\Sigma_1}^i)$ which comes from $\mathcal{N}(\phi(\mu+D_i),(\alpha^{-1}+\gamma^{-1}+\beta^{-1})\phi\phi^T)$ with probability $P_1^1$, from $\mathcal{N}(\phi(\mu- D_i),(\alpha^{-1}+\gamma^{-1}+\beta^{-1})\phi\phi^T)$ with probability $P_2^1$ and from $\mathcal{N}(\phi \mu,(\alpha^{-1}+\gamma^{-1}+\beta^{-1})\phi\phi^T)$ with probability $(1-P_1^1-P_2^1)$. The FC makes the global decision about the phenomenon by considering the likelihood ratio test (LRT) which is given by
\begin{equation}
\label{l1}
\prod\limits_{i=1}^{B}\dfrac{f_1(\tilde{y_i})}{f_0(\tilde{y_i})}\prod\limits_{i=B+1}^{N}\dfrac{f_1(y_i)}{f_0(y_i)} \quad \mathop{\stackrel{H_1}{\gtrless}}_{H_0} \quad  \dfrac{P_0}{P_1} 
\end{equation}
where $B/N=\alpha$. The eavesdropper is assumed to be unaware of the identity of the noise injecting Byzantines and considers each node $i$ to be Byzantine with a certain probability $\alpha$. Thus, the distribution of the data $\hat{y_i}$ at the eavesdropper under hypothesis $H_j$ can be approximated as IID multivariate Gaussian mixture with the same Gaussian parameters $\mathcal{N}(\tilde{\mu_j}^i,\tilde{\Sigma_j}^i)$ as above, however, with rescaled mixing probabilities $(\alpha P_1^j,\alpha P_2^j, 1-\alpha P_1^j-\alpha P_2^j)$. 
The eavesdropper makes the global decision about the phenomenon by considering the likelihood ratio test (LRT) which is given by

\begin{equation}
\label{l2}
\prod\limits_{i=1}^{N}\dfrac{f_1(\hat{y_i})}{f_0(\hat{y_i})}\quad \mathop{\stackrel{H_1}{\gtrless}}_{H_0} \quad  \dfrac{P_0}{P_1}. 
\end{equation}

Analyzing the performance of the likelihood ratio detector in~\eqref{l1} and~\eqref{l2} in a closed form is difficult in general. Thus, we use the modified deflection coefficient~\cite{MDC} in lieu of the probability of error of the system. Deflection coefficient reflects the output signal to noise ratio and widely used in optimizing the performance of detection systems. As stated earlier, the modified deflection coefficient is defined as

\begin{eqnarray*}
D(y_i)&=& (\mu_1^i-\mu_0^i)^T (\Sigma_1^i)^{-1} (\mu_1^i-\mu_0^i) 
\end{eqnarray*}
where $\mu_j^i$ and $\Sigma_j^i$ are the mean and the covariance matrix of $y_i$ under the hypothesis $H_j$, respectively. Using these notations, the modified deflection coefficient at the FC can be written as 

\begin{equation*}
D(FC)=B D(\tilde{y_i})+(N-B)D(y_i).
\end{equation*}
Dividing both sides of the above equation by $N$, we get

\begin{equation*}
D_{FC}=\alpha D(\tilde{y_i})+(1-\alpha)D(y_i)
\end{equation*}
where $D_{FC}=\frac{D(FC)}{N}$ and will be used as the performance metric as a surrogate for the probability of error.
Similarly, the modified deflection coefficient at the eavesdropper can be written as 

\begin{equation*}
D_{EV}=\frac{D(EV)}{N}=D(\hat{y_i}).
\end{equation*}

\section{System Design with Physical Layer Secrecy Guarantees}
\label{sec5}
Notice that, both $D_{FC}$ and $D_{EV}$ are functions of the compression ratio $c$ and artificial noise injection parameters $(\alpha,W_i)$ which are under the control of the FC. This motivates us to obtain the optimal values of system parameters
under a physical layer secrecy constraint. The problem can be formally stated as: 

\begin{equation}
\label{opt}
\begin{aligned}
& \underset{c,\alpha,W_i}{\text{maximize}}
& & \alpha D(\tilde{y_i})+(1-\alpha)D(y_i) \\
& \text{subject to}
& & D(\hat{y_i})\leq \tau \\
\end{aligned}
\end{equation}

\noindent where $c=M/P$ is the compression ratio. We refer to $D(\hat{y_i})\leq \tau$, where $\tau\geq 0$, as the physical layer secrecy constraint which reflects the security performance of the system. The case where $\tau=0$, or equivalently $D(\hat{y_i})= 0$, is referred to as the perfect secrecy constraint. In the wiretap channel literature, it is typical to consider the maximum degree of information achieved
by the main user (FC), while the information of the eavesdropper is exactly zero. This is commonly referred to as the perfect secrecy regime~\cite{wyner}. Next, we derive closed form expressions of the modified deflection coefficients at both the FC and the eavesdropper.

\begin{figure*}[t]
\centering
\subfigure[]{
\includegraphics[%
  width=0.35\textwidth,clip=true]{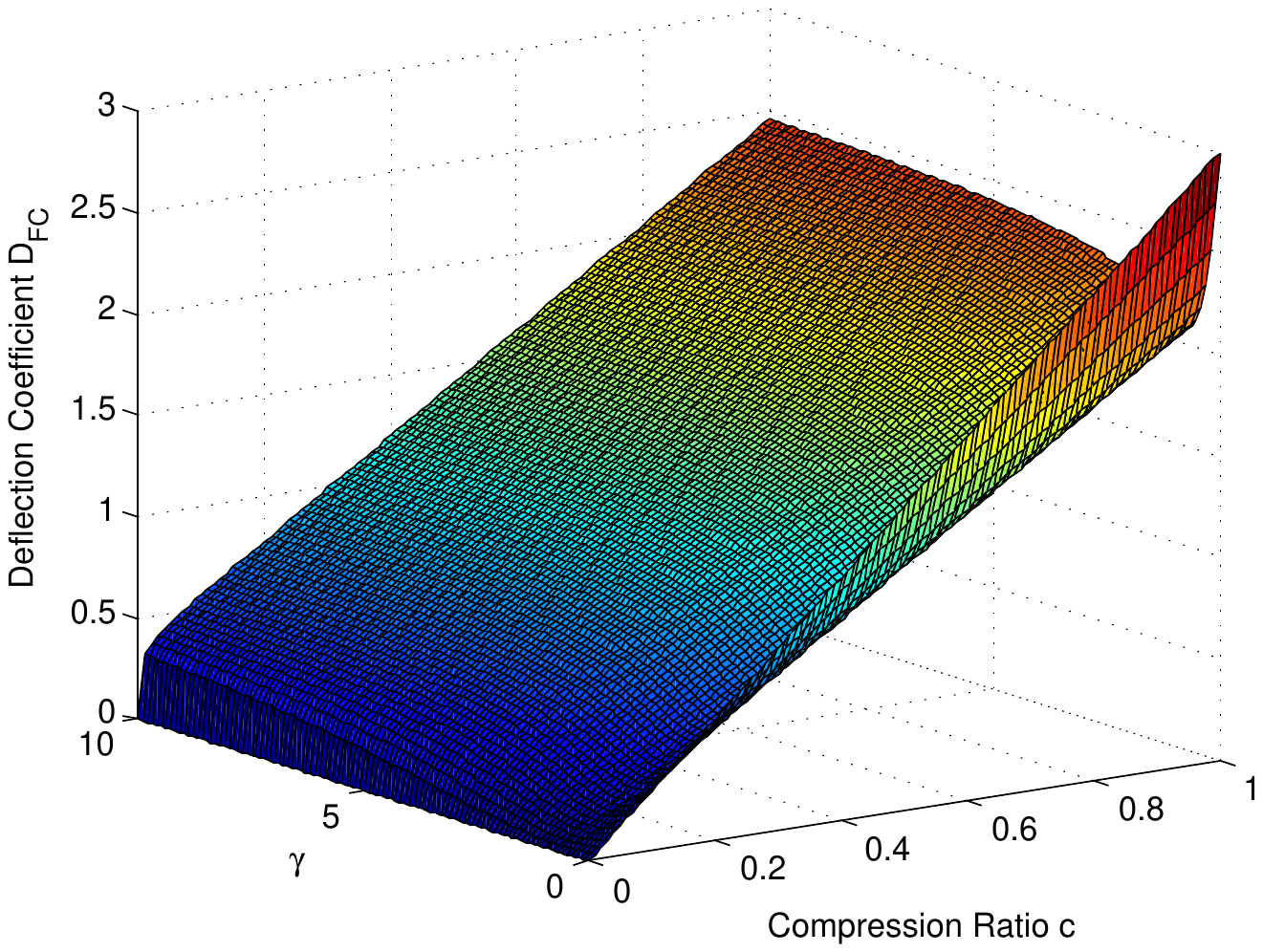}
\label{ruleVsFrac0}}
\hspace{0.1in}
\subfigure[] {
\includegraphics[%
  width=0.35\textwidth,clip=true]{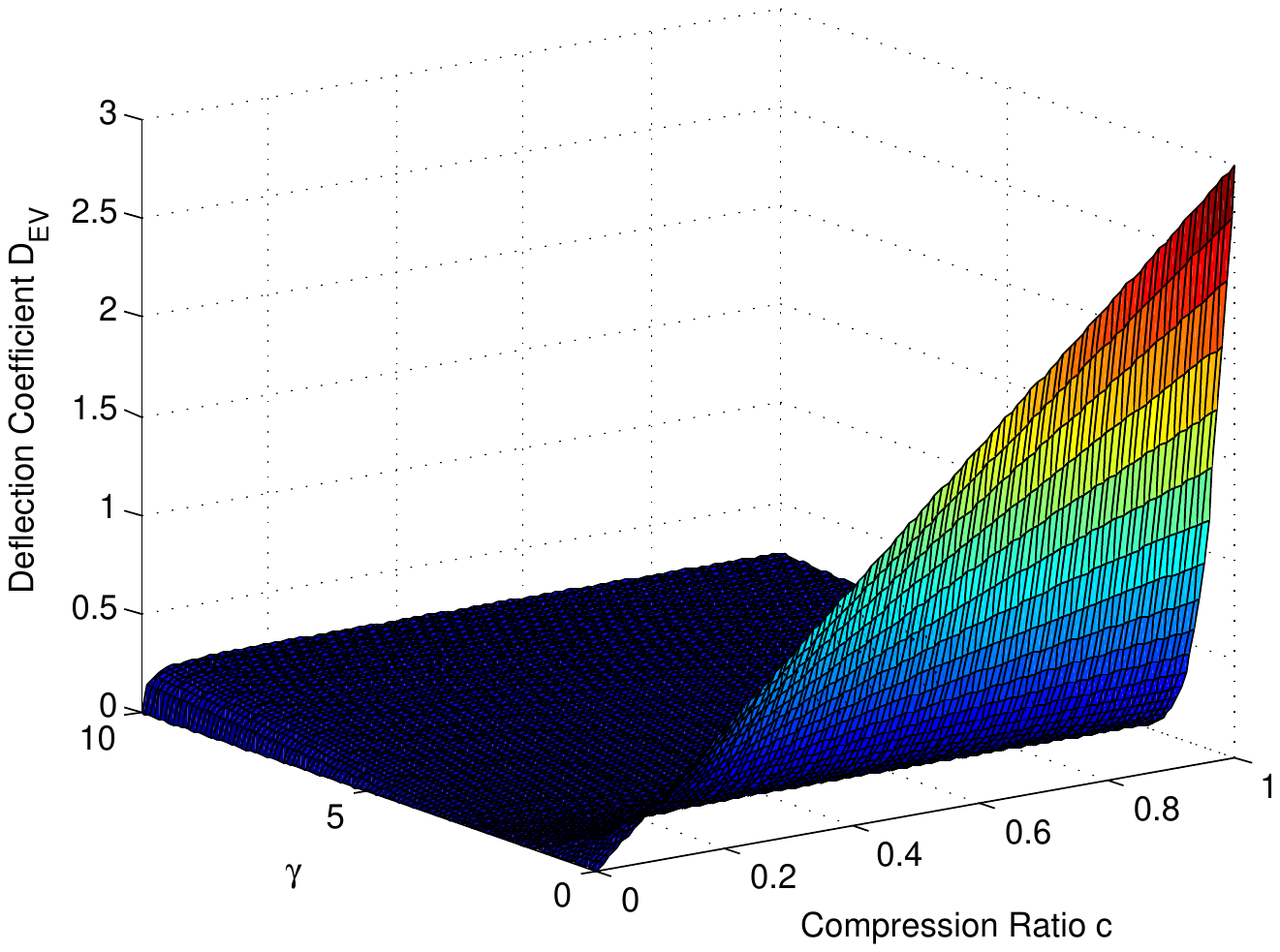}
\label{ruleVsFrac20} }
\caption{\vspace{-0.1in} Modified Deflection Coefficient analysis. \subref{ruleVsFrac0} $D_{FC}$ with varying $c$ and $\kappa$. \subref{ruleVsFrac20} $D_{EV}$ with varying $c$ and $\kappa$.}
\label{OptParamDesign0}
\vspace{-0.1in}
\end{figure*}
\vspace{-0.1in}
\begin{figure*}[t]
\centering
\subfigure[]{
\includegraphics[%
  width=0.35\textwidth,clip=true]{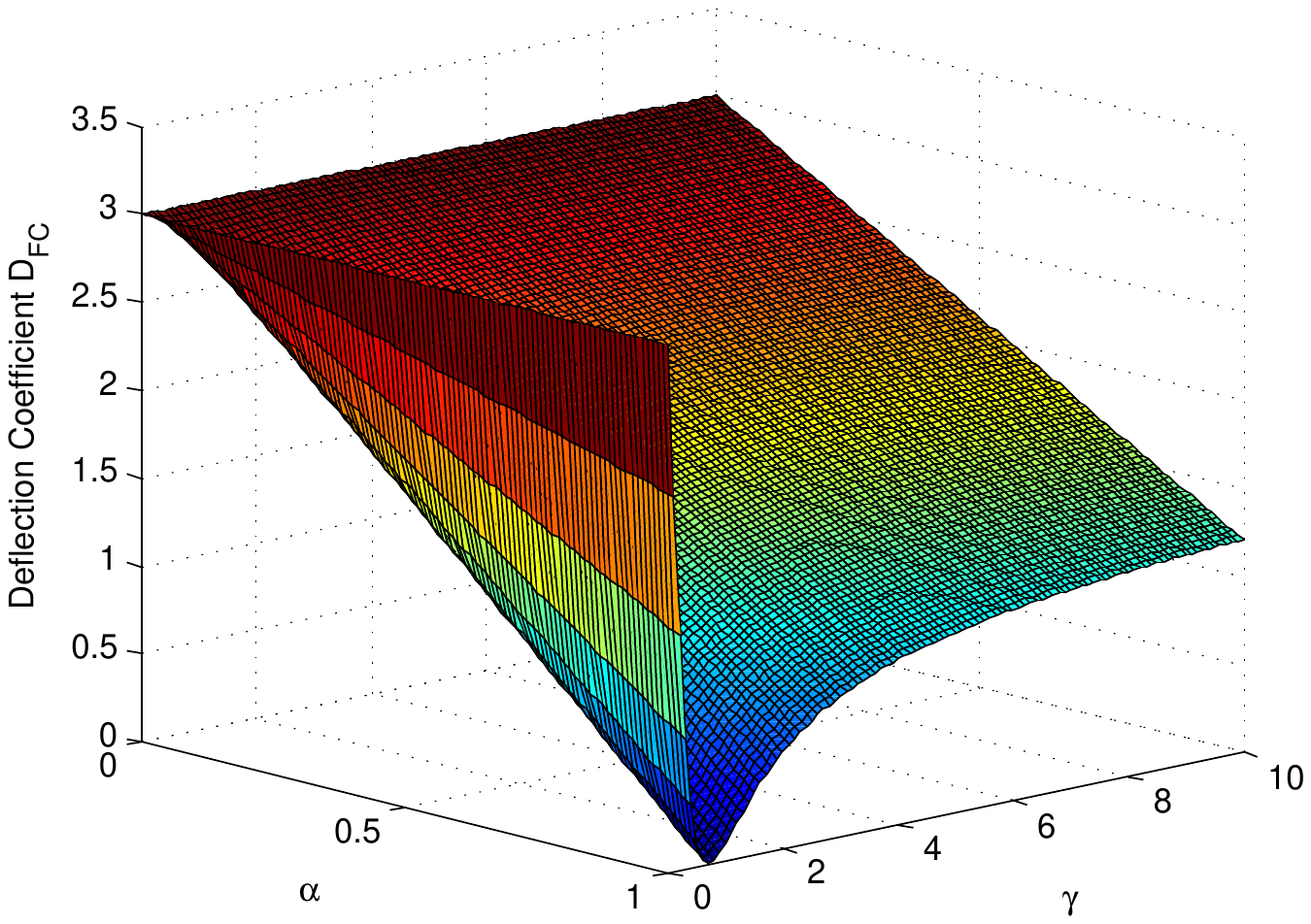}
\label{ruleVsFrac1}}
\hspace{0.1in}
\subfigure[] {
\includegraphics[%
  width=0.35\textwidth,clip=true]{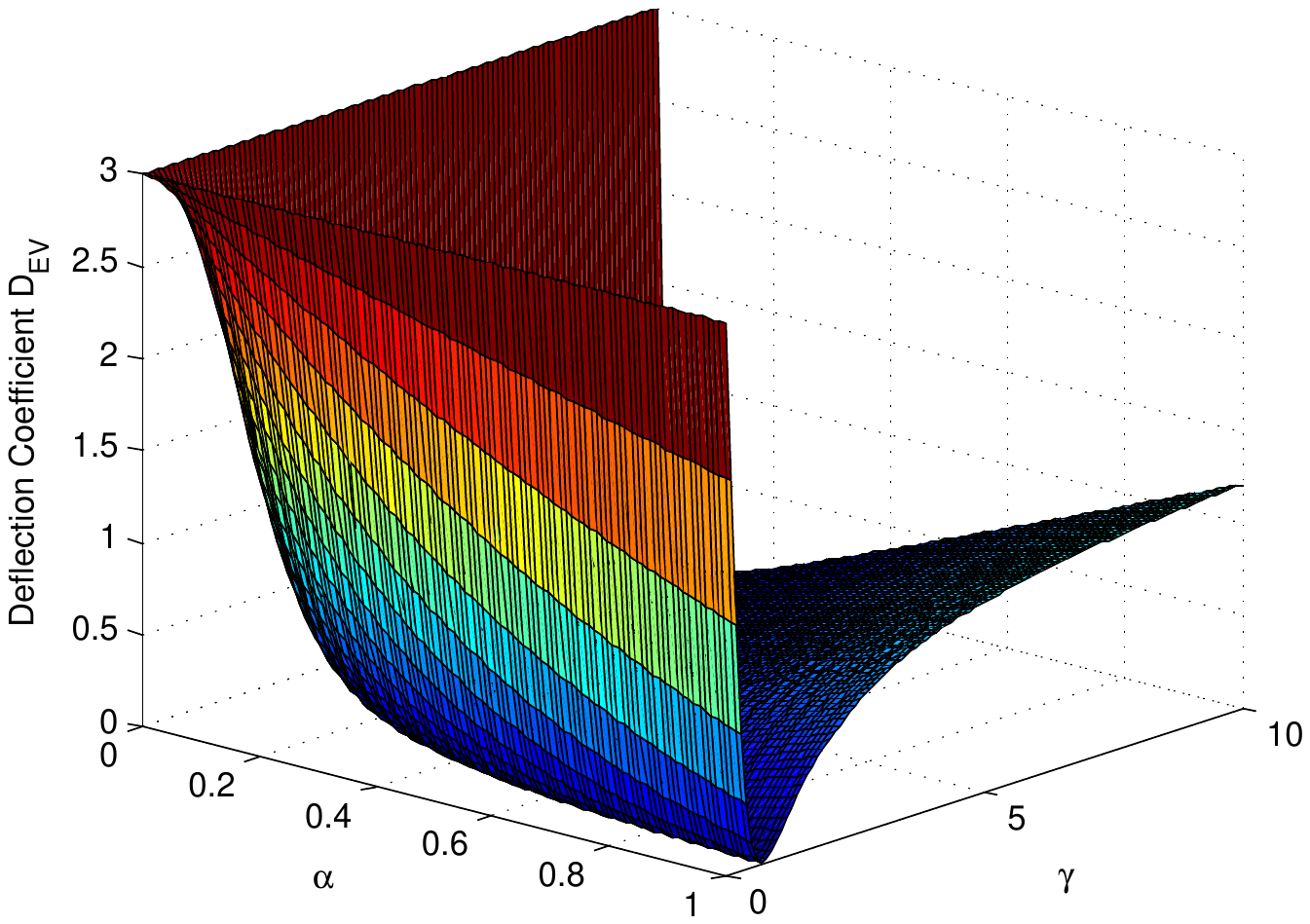}
\label{ruleVsFrac12} }
\caption{\vspace{-0.1in} Modified Deflection Coefficient analysis. \subref{ruleVsFrac1} $D_{FC}$ with varying $\alpha$ and $\kappa$. \subref{ruleVsFrac12} $D_{EV}$ with varying $\alpha$ and $\kappa$.}
\label{OptParamDesign1}
\vspace{-0.1in}
\end{figure*}

\subsection{Performance Analysis of Collaborative Compressive Detection with an Eavesdropper}

First, we derive closed form expressions of the modified deflection coefficients at the FC.

\subsubsection{A Closed Form Expression of the Modified Deflection Coefficient at the FC}
As stated earlier, the modified deflection coefficient at the FC is
\begin{equation*}
D_{FC}=\alpha D(\tilde{y_i})+(1-\alpha)D(y_i).
\end{equation*}

It can be shown that $$D(y_i)=\dfrac{\|\hat{P}\mu\|_2^2}{\gamma^{-1}+\beta^{-1}},$$ 
where $\hat{P}=\phi^T(\phi\phi^T)\phi$.

Next, to derive $D(\tilde{y_i})$, observe that $\tilde{y_i}$ is distributed as a multivariate Gaussian mixture with
\begin{eqnarray*}
\tilde{\mu}_0^i &=& (P_1^0-P_2^0)\phi D_i\\
\tilde{\mu}_1^i &=& (P_1^1-P_2^1)\phi D_i+\phi \mu\\
\tilde{\Sigma}_1^i &=& \sigma^2\phi\phi^T + \sum_{j=1}^{3} P_j^1(\tilde{\mu}_1^i(j)-\tilde{\mu}_1^i)(\tilde{\mu}_1^i(j)-\tilde{\mu}_1^i)^T 
\end{eqnarray*} 

where $\tilde{\mu}_1^i(1)=\phi(\mu+D_i)$, $\tilde{\mu}_1^i(2)=\phi(\mu-D_i)$, $\tilde{\mu}_1^i(3)=\phi\mu$, $P_3^1=1-P_1^1-P_2^1$ and $\sigma^2=(\alpha^{-1}+\beta^{-1}+\gamma^{-1})$. After some derivation, it can be shown that
\begin{equation*}
\tilde{\Sigma}_1^i = \sigma^2\phi\phi^T + P_t [\phi D_i D_i^T \phi^T]
\end{equation*} 
where $P_t=P_1^1+P_2^1-(P_1^1-P_2^1)^2$. Also, notice that $\tilde{\Sigma}_1^i$ is of the form $A+bb^T$. Now, using the Sherman-Morrison formula~\cite{sherman}, its inverse can be obtained to be
\begin{equation}
\label{cov}
(\tilde{\Sigma}_1^i)^{-1}=\frac{(\phi\phi^T)^{-1}}{\sigma^2}-\frac{P_t(\phi\phi^T)^{-1}\phi D_i D_i^T\phi^T(\phi\phi^T)^{-1}}{\sigma^4+\sigma^2P_tD_i^T\phi^T(\phi\phi^T)^{-1}\phi D_i}
\end{equation}
with $\sigma^2=(\alpha^{-1}+\beta^{-1}+\gamma^{-1})$.

Also,
\begin{equation}
\label{mu}
(\tilde{\mu}_1^i-\tilde{\mu}_0^i)=\phi \mu-P_b\phi D_i
\end{equation}
where $P_b=(P_1^0-P_2^0)+(P_2^1-P_1^1)$. Using~\eqref{cov},~\eqref{mu} and the fact that $D_i=\kappa \mu$ where $\kappa$ is referred to as the noise strength, the modified deflection coefficient $D(\tilde{y_i})$ can be derived to be\footnote{For $\mu=0$,  $D_{FC}\approx\alpha\frac{P_b^2}{P_t+\frac{\sigma^2}{\|\hat{P}d\|}}$ where $D_i=d,\forall i$.} 
\begin{equation}
D(\tilde{y_i})=(1-P_b \kappa)^2\frac{\|\hat{P} \mu\|_2^2}{\sigma^2}-P_t \gamma^2 (1-P_b \kappa)^2\frac{\|\hat{P} \mu\|_2^4}{\sigma^2 r_b}
\end{equation}

where $r_b=\sigma^2+ P_t \kappa^2 \|\hat{P} \mu\|_2^2$ and $\sigma^2=(\alpha^{-1}+\beta^{-1}+\gamma^{-1})$.

\begin{proposition}
\label{prp1}
Suppose that $\sqrt{\frac{P}{M}}\hat{P}$ provides an $\epsilon$-stable embedding of $(\mathcal{U},\{0\})$. Then the modified deflection coefficient at the FC for any $\mu\in \mathcal{U}$ can be approximated as 

\begin{equation}
D_{FC}\approx\alpha \frac{(1-P_b\kappa)^2}{\kappa^2 P_t+c^{-1}\frac{\sigma^2}{\| \mu\|_2^2}}+(1-\alpha)c \frac{\| \mu\|_2^2}{\sigma^2}
\end{equation}

\vspace{-0.05in}
where

$P_b=(P_1^0-P_2^0)+(P_2^1-P_1^1)$, $P_t=P_1^1+P_2^1-(P_1^1-P_2^1)^2$ and $\sigma^2=(\alpha^{-1}+\beta^{-1}+\gamma^{-1})$.
\end{proposition}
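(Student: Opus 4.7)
The plan is to substitute the closed-form expressions for $D(\tilde{y_i})$ and $D(y_i)$ displayed immediately above the proposition into $D_{FC}=\alpha D(\tilde{y_i})+(1-\alpha)D(y_i)$, then invoke the $\epsilon$-stable embedding hypothesis to replace $\|\hat{P}\mu\|_2^2$ by its nominal approximation $c\|\mu\|_2^2$, and finally simplify algebraically.

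First I would rewrite $D(\tilde{y_i})$ in a single-fraction form. Factoring $(1-P_b\kappa)^2\|\hat{P}\mu\|_2^2/\sigma^2$ out of the two terms in the given expression for $D(\tilde{y_i})$ leaves a bracket $1 - P_t\kappa^2\|\hat{P}\mu\|_2^2/r_b$. Substituting $r_b=\sigma^2+P_t\kappa^2\|\hat{P}\mu\|_2^2$, this bracket equals $\sigma^2/r_b$, and the $\sigma^2$'s cancel to give
\begin{equation*}
D(\tilde{y_i}) \;=\; \frac{(1-P_b\kappa)^2\,\|\hat{P}\mu\|_2^2}{\sigma^2 + P_t\kappa^2\,\|\hat{P}\mu\|_2^2}.
\end{equation*}
Dividing numerator and denominator by $\|\hat{P}\mu\|_2^2$ puts this in the form
\begin{equation*}
D(\tilde{y_i}) \;=\; \frac{(1-P_b\kappa)^2}{P_t\kappa^2 + \sigma^2/\|\hat{P}\mu\|_2^2}.
\end{equation*}

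Next I would invoke Definition~\ref{stableembedding}: since $\sqrt{P/M}\,\hat{P}$ provides an $\epsilon$-stable embedding of $(\mathcal{U},\{0\})$, we have $(1-\epsilon)\|\mu\|_2^2 \leq (P/M)\|\hat{P}\mu\|_2^2 \leq (1+\epsilon)\|\mu\|_2^2$, so for nominal values of $\epsilon$ the approximation $\|\hat{P}\mu\|_2^2\approx (M/P)\|\mu\|_2^2 = c\|\mu\|_2^2$ holds. Substituting this into the display above gives $D(\tilde{y_i})\approx (1-P_b\kappa)^2 / \bigl(\kappa^2 P_t + c^{-1}\sigma^2/\|\mu\|_2^2\bigr)$, while the same substitution in the given expression for $D(y_i)$ yields $D(y_i)\approx c\|\mu\|_2^2/\sigma^2$. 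Assembling these into $\alpha D(\tilde{y_i})+(1-\alpha)D(y_i)$ reproduces the claimed formula.

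There is no serious obstacle: the proof is essentially bookkeeping, and the constants $P_b$, $P_t$, $\sigma^2$ are all inherited from the surrounding derivation. The key micro-step is the algebraic collapse that converts the two-term expression for $D(\tilde{y_i})$ into a single rational function. Without it, applying the embedding would not produce an expression in which the compression ratio $c$ enters cleanly through a single occurrence of $c^{-1}$ in the denominator, as required by the target formula.
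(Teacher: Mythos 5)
Your proof is correct and follows essentially the same route as the paper: substitute the closed-form expressions into $D_{FC}=\alpha D(\tilde{y_i})+(1-\alpha)D(y_i)$ and use the stable-embedding hypothesis to replace $\|\hat{P}\mu\|_2^2$ by $c\|\mu\|_2^2$. Your explicit collapse of $D(\tilde{y_i})$ to the single rational function $(1-P_b\kappa)^2/\bigl(P_t\kappa^2+\sigma^2/\|\hat{P}\mu\|_2^2\bigr)$ is a step the paper leaves implicit (and it also quietly corrects the $\gamma^2$-for-$\kappa^2$ typo in the paper's displayed expression for $D(\tilde{y_i})$), so your write-up is, if anything, slightly more complete than the paper's.
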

\begin{proof}
Using the fact that $\sqrt{\frac{P}{M}}\hat{P}$ provides an $\epsilon$-stable embedding of $(U,\{0\})$, for any $\mu\in\mathcal{U}$, $D(y_i)$ and $D(\tilde{y}_i)$ can be approximated as
\begin{eqnarray*}
D(y_i)&=& \frac{M}{P}\frac{\| \mu\|_2^2}{\sigma^2}\\
D(\tilde{y}_i)&=&  \frac{M}{P}\frac{\| \mu\|_2^2}{\sigma^2}(1-P_b\kappa)^2\left(1-\frac{M}{P}\frac{\| \mu\|_2^2}{r_b}\kappa^2P_t\right)
\end{eqnarray*}
where $r_b=\sigma^2+ P_t \kappa^2 \|\hat{P} \mu\|_2^2$ and $\sigma^2=(\alpha^{-1}+\beta^{-1}+\gamma^{-1})$.
Plugging in the above values in $D_{FC}=\alpha D(\tilde{y_i})+(1-\alpha)D(y_i)$ yields the desired result.
\end{proof}

Next, we derive the modified deflection coefficient at the eavesdropper.
\subsubsection{A Closed Form Expression of the Modified Deflection Coefficients at the Eavesdropper}
As stated earlier, the modified deflection coefficient of the eavesdropper is
\begin{equation*}
D_{EV}=D(\hat{y_i}).
\end{equation*}

Next, to derive $D(\hat{y_i})$, observe that $\hat{y_i}$ is distributed as a multivariate Gaussian mixture with
\begin{eqnarray*}
\hat{\mu}_0^i &=& \alpha(P_1^0-P_2^0)\phi D_i\\
\hat{\mu}_1^i &=& \alpha(P_1^1-P_2^1)\phi D_i+\phi \mu\\
\hat{\Sigma}_1^i &=& \sigma^2\phi\phi^T + \sum_{j=1}^{3} p_j^1(\hat{\mu}_1^i(j)-\hat{\mu}_1^i)(\hat{\mu}_1^i(j)-\hat{\mu}_1^i)^T 
\end{eqnarray*} 

with $\hat{\mu}_1^i(1)=\phi(\mu+D_i)$, $\hat{\mu}_1^i(2)=\phi(\mu-D_i)$, $\hat{\mu}_1^i(3)=\phi\mu$, $p_1^1=\alpha P_1^1$, $p_2^1=\alpha P_2^1$, $p_3^1=1-\alpha(P_1^1-P_2^1)$ and $\sigma^2=(\alpha^{-1}+\gamma^{-1}+\beta^{-1})$. Using these values, we state our result in the next proposition.\footnote{For $\mu=0$, $D_{EV}\approx\frac{(\alpha P_b)^2}{\alpha P_t^E+\frac{\sigma^2}{\|\hat{P}d\|}}$ where $D_i=d,\forall i$.}
\begin{proposition}
\label{prp2}
Suppose that $\sqrt{\frac{P}{M}}\hat{P}$ provides an $\epsilon$-stable embedding of $(\mathcal{U},\{0\})$. Then,  the modified deflection coefficient at the eavesdropper for any $\mu\in\mathcal{U}$ can be approximated as 

\begin{equation}
D_{EV}\approx\frac{(1-\alpha P_b\kappa)^2}{\alpha \kappa^2 P_t^E+c^{-1}\frac{\sigma^2}{\| \mu\|_2^2}}
\end{equation}

\vspace{-0.05in}
where

$P_b=(P_1^0-P_2^0)+(P_2^1-P_1^1)$, $P_t^E=P_1^1+P_2^1-\alpha(P_1^1-P_2^1)^2$ and $\sigma^2=(\alpha^{-1}+\beta^{-1}+\gamma^{-1})$.
\end{proposition}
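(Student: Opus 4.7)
The plan is to mirror the derivation of $D(\tilde{y}_i)$ that led to Proposition~\ref{prp1}, adjusting only for the fact that the eavesdropper cannot identify the Byzantine nodes and therefore sees every observation as a mixture with rescaled weights $(\alpha P_1^j,\alpha P_2^j,1-\alpha P_1^j-\alpha P_2^j)$. Since the modified deflection coefficient is $D(\hat y_i)=(\hat\mu_1^i-\hat\mu_0^i)^T(\hat\Sigma_1^i)^{-1}(\hat\mu_1^i-\hat\mu_0^i)$, I need closed forms for both the mean gap and the covariance under $H_1$, and then I will invoke the $\epsilon$-stable embedding to pass from $\|\hat P\mu\|_2^2$ to $c\|\mu\|_2^2$.

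First I would compute the mean difference directly from the expressions for $\hat\mu_0^i$ and $\hat\mu_1^i$ stated just before the proposition. Subtracting gives $\hat\mu_1^i-\hat\mu_0^i=\phi\mu-\alpha P_b\,\phi D_i$, and substituting $D_i=\kappa\mu$ yields $(1-\alpha P_b\kappa)\phi\mu$, which produces the $(1-\alpha P_b\kappa)^2$ factor in the numerator.

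Second, and this is the main obstacle, I would carefully evaluate $\hat\Sigma_1^i$. The three mean-centered deviations are $(1-\alpha\Delta)\phi D_i$, $-(1+\alpha\Delta)\phi D_i$, and $-\alpha\Delta\,\phi D_i$, with weights $\alpha P_1^1$, $\alpha P_2^1$, $1-\alpha(P_1^1+P_2^1)$, where $\Delta:=P_1^1-P_2^1$. Expanding the squares, the coefficient multiplying $\phi D_iD_i^T\phi^T$ becomes $\alpha(P_1^1+P_2^1)+2\alpha^2\Delta(P_2^1-P_1^1)+\alpha^2\Delta^2+\alpha^3\Delta^2(P_1^1+P_2^1)-\alpha^3\Delta^2(P_1^1+P_2^1)$. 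The cubic terms cancel, leaving $\alpha[(P_1^1+P_2^1)-\alpha\Delta^2]=\alpha P_t^E$. Thus $\hat\Sigma_1^i=\sigma^2\phi\phi^T+\alpha P_t^E\,\phi D_iD_i^T\phi^T$, which has exactly the rank-one-update structure that made Sherman--Morrison applicable in~\eqref{cov}.

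Third, I would apply Sherman--Morrison to invert $\hat\Sigma_1^i$ and then sandwich it between the mean-gap vectors. Using $D_i=\kappa\mu$, all quadratic forms collapse to multiples of $\|\hat P\mu\|_2^2$, so
\begin{equation*}
D(\hat y_i)=(1-\alpha P_b\kappa)^2\frac{\|\hat P\mu\|_2^2}{\sigma^2}\left(1-\frac{\alpha\kappa^2 P_t^E\,\|\hat P\mu\|_2^2}{\sigma^2+\alpha\kappa^2 P_t^E\,\|\hat P\mu\|_2^2}\right),
\end{equation*}
which simplifies to $\dfrac{(1-\alpha P_b\kappa)^2\|\hat P\mu\|_2^2}{\sigma^2+\alpha\kappa^2 P_t^E\,\|\hat P\mu\|_2^2}$. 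Finally, invoking the $\epsilon$-stable embedding of $(\mathcal U,\{0\})$ to approximate $\|\hat P\mu\|_2^2\approx c\|\mu\|_2^2$ and dividing numerator and denominator by $c\|\mu\|_2^2$ yields the stated form of $D_{EV}$. The same calculation specializes to the footnote case $\mu=0$ with $D_i=d$ by keeping $\|\hat P d\|_2^2$ in place of $c\|\mu\|_2^2$ in the denominator.
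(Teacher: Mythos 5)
Your proposal is correct and follows exactly the route the paper intends: it reproduces the Proposition~\ref{prp1} derivation (mean gap, rank-one covariance update, Sherman--Morrison, then the $\epsilon$-stable embedding to replace $\|\hat{P}\mu\|_2^2$ by $c\|\mu\|_2^2$) with the eavesdropper's rescaled mixture weights, which is precisely the computation the paper omits as ``similar to that of Proposition~\ref{prp1}.'' Your algebra checks out, including the cancellation yielding the coefficient $\alpha P_t^E$, and you correctly use the third mixing weight $1-\alpha(P_1^1+P_2^1)$ where the paper's displayed $p_3^1=1-\alpha(P_1^1-P_2^1)$ appears to be a typo.
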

\begin{proof}
The proof is similar to that of Proposition~\ref{prp1} and is, therefore, omitted.
\end{proof}

In general,  there is a trade-off between the detection performance and the security performance of the system.
To gain insights into this trade-off, in Figure~\ref{OptParamDesign0} we plot the modified deflection coefficient, both at the FC and at the eavesdropper, as a function of compression ratio $(c)$ and noise strength $(\kappa)$ when $\alpha=0.3,\; P_1^0=P_2^1=0.8,\; P_2^0=P_1^1=0.1$ and $\frac{\| \mu\|_2^2}{\sigma^2}=3$. Next, in Figure~\ref{OptParamDesign1} we plot the modified deflection coefficient, both at the FC and at the eavesdropper, as a function of the fraction of artificial noise injecting nodes $(\alpha)$ and noise strength $(\kappa)$ when $P_1^0=P_2^1=0.8,\; P_2^0=P_1^1=0.1$ and $\frac{M}{P}\frac{\| \mu\|_2^2}{\sigma^2}=3$.
It can be seen from Figure~\ref{OptParamDesign0} and Figure~\ref{OptParamDesign1} that $D_{FC}$ and $D_{EV}$ do not exhibit nice properties (monotonicity or convexity) with respect to the system parameters and, therefore, it is not an easy task to design the system parameters under an arbitrary physical layer secrecy constraint.  Also notice that, a specific case where the eavesdropper is completely blind deserves particular attention. This is referred to as the perfect secrecy regime, i.e., $D_{EV}=0$.
In the next subsection, we explore the problem of system design in a holistic manner in the perfect secrecy regime. More specifically, we are interested in analyzing the behavior of the modified deflection coefficient, both at the FC and at the eavesdropper, as a function of compression ratio $(c=M/P)$ and artificial noise injection parameters $(\alpha,W_i)$.  

\subsection{Optimal System Design Under Perfect Secrecy Constraint}
The goal of the designer is to maximize the detection performance $D_{FC}$, while ensuring perfect secrecy at the eavesdropper. The system design problem under perfect secrecy constraint can be formally stated as: 

\begin{equation}
\label{optper}
\begin{aligned}
& \underset{c,\alpha,W_i}{\text{maximize}}
& & \alpha D(\tilde{y_i})+(1-\alpha)D(y_i) \\
& \text{subject to}
& & D(\hat{y_i})=0 \\
\end{aligned}
\end{equation}
where $c$ is the compression ratio and $(\alpha,\;W_i)$ are the artificial noise injection parameters. This reduction of the search space, which arises as a natural consequence of the perfect secrecy constraint, has the additional benefit of simplifying the mathematical analysis. 
Next, we first explore the answer to the question: Does compression help in improving the security performance of the system? 

\subsubsection{Does Compression Help?}
We first consider the case where $\alpha P_b\kappa\neq1$. In this regime, for fixed values of $\alpha$, $P_b$ and $\kappa$, the modified deflection coefficient, both at the FC and the eavesdropper, is a monotonically increasing function of the compression ratio. In other words, $\frac{dD_{FC}}{dc}>0$ and $\frac{dD_{EV}}{dc}>0$. 
This suggests that compression improves the security performance at the expense of detection performance. More specifically, the FC would decrease the compression ratio until the physical layer secrecy constraint is satisfied. As a consequence, it will result in performance loss at the FC due to compression. In other words, there is a trade-off between the detection performance and the security performance of the  system. Observe that, $D_{EV}=0$ if and only if $\alpha P_b\kappa=1$ (ignoring the extreme conditions such as $c=0$ or $\kappa=\infty$) and, in this regime, $D_{FC}$ is a monotonically increasing function of the compression ratio $c$ and $D_{EV}$ is independent of the compression ratio $c$.
These results are summarized in the the following proposition.

\begin{proposition}
\label{prpc}
In the perfect secrecy regime $(i.e., \alpha P_b\kappa=1)$, $D_{FC}$ is a monotonically increasing function of the compression ratio $c$ and $D_{EV}$ is independent of the compression ratio $c$. When $ \alpha P_b\kappa\neq 1$, the modified deflection coefficient, both at the FC and the eavesdropper, is a monotonically increasing function of the compression ratio.
\end{proposition}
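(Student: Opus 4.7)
The plan is to work directly from the closed-form approximations for $D_{FC}$ and $D_{EV}$ given in Proposition~\ref{prp1} and Proposition~\ref{prp2}, and to establish monotonicity in $c$ by inspecting the derivative (or, more cheaply, by recognizing the dependence on $c$ as a composition of elementary monotone maps). Since $\alpha$, $\kappa$, $P_b$, $P_t$, $P_t^E$, $\sigma^2$ and $\|\mu\|_2^2$ do not depend on $c$, the entire analysis reduces to studying how these two expressions depend on the single variable $c\in(0,1]$.

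First I would handle the eavesdropper's coefficient, since this is where the two cases split. Writing
\begin{equation*}
D_{EV}(c) \;\approx\; \frac{(1-\alpha P_b \kappa)^2}{\alpha \kappa^2 P_t^E + c^{-1}\,\sigma^2/\|\mu\|_2^2},
\end{equation*}
the numerator is a constant (with respect to $c$). In the perfect secrecy regime, $\alpha P_b \kappa = 1$ forces this numerator to zero, so $D_{EV}(c)\equiv 0$ for every $c\in(0,1]$, establishing independence from $c$. When $\alpha P_b \kappa \neq 1$ the numerator is strictly positive, and the denominator is of the form $A + B c^{-1}$ with $A,B>0$, which is strictly decreasing in $c$. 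The ratio is therefore strictly increasing in $c$, proving the second assertion for $D_{EV}$.

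Next I would turn to the FC. Writing
\begin{equation*}
D_{FC}(c) \;\approx\; \alpha\,\frac{(1-P_b\kappa)^2}{\kappa^2 P_t + c^{-1}\,\sigma^2/\|\mu\|_2^2} \;+\; (1-\alpha)\,c\,\frac{\|\mu\|_2^2}{\sigma^2},
\end{equation*}
I would observe that the second summand is linear in $c$ with a nonnegative slope, and that the first summand has a constant nonnegative numerator and a denominator of the form $A + B c^{-1}$ with $A,B>0$, which is again strictly decreasing in $c$. Hence the first summand is nondecreasing in $c$ and the sum is strictly increasing in $c$ whenever $\alpha<1$; this argument makes no use of the relation $\alpha P_b \kappa = 1$, so it covers both the perfect secrecy regime and the complementary regime uniformly.

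There is no real obstacle in this argument because the propositions already give explicit closed-form approximations; the only care needed is to treat the perfect secrecy condition purely as a relation between $\alpha$, $P_b$, and $\kappa$ (none of which are optimization variables when differentiating with respect to $c$), so that the numerator of $D_{EV}$ can be regarded as a $c$-independent constant that happens to vanish when $\alpha P_b \kappa =1$. A minor caveat worth flagging in the write-up is that the approximation symbol $\approx$ in Propositions~\ref{prp1} and~\ref{prp2} relies on the $\epsilon$-stable embedding assumption, so strictly speaking the monotonicity is asserted for the $\epsilon$-stable-embedding approximations rather than for the exact deflection coefficients; within that regime, however, the argument above is a one-line calculus observation.
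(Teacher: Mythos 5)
Your proposal is correct and follows essentially the same route as the paper, which likewise reads the monotonicity in $c$ directly off the closed-form approximations of Propositions~\ref{prp1} and~\ref{prp2} (denominators of the form $A+Bc^{-1}$ with $B>0$, numerators constant in $c$ and vanishing exactly when $\alpha P_b\kappa=1$). Your explicit handling of the degenerate boundary (strict increase requiring $\alpha<1$ or a nonvanishing numerator) and the caveat about the $\epsilon$-stable-embedding approximation are reasonable refinements of what the paper states without proof.
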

As mentioned above, in the perfect secrecy regime $D_{EV}$ is independent of the compression ratio $c$ and the network designer can fix $c=c_{max}$, where the value of $c_{max}$ may be dependent on the application of interest. The system design problem under perfect secrecy constraint can be reformulated as 
\begin{equation}
\label{opt2}
\underset{\alpha,W_i}{\arg \max} \; D_{FC}(c_{max},\kappa=1/(P_b\alpha)).
\end{equation}

Next, we analyze the behavior of the $D_{FC}(c_{max},\kappa=1/(P_b\alpha))$ as a function of artificial noise injection parameters $(\alpha,W_i)$.  

\subsubsection{Optimal Artificial Noise Injection Parameters}
\begin{figure}[!t]
\centering
    \includegraphics[width=3.5in]{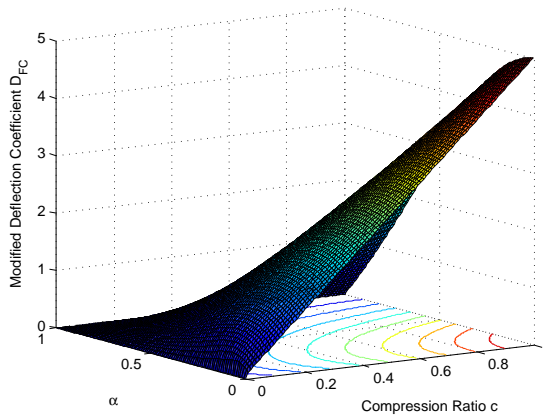}
\vspace{-0.2in}
    \caption{Modified Deflection Coefficient as a function of $\alpha$ and compression ratio $c=M/P$ for $SNR=5$dB in perfect secrecy regime.}
    \label{perfect}
\end{figure}
\begin{proposition}
\label{prpa}
In the high signal to noise ratio regime (defined as $\frac{\| \mu\|_2^2}{\sigma^2}>\frac{P_b^2}{P_t}$ where $\sigma^2=(\alpha^{-1}+\gamma^{-1}+\beta^{-1})$), the modified deflection coefficient at the FC, $D_{FC}$, is a monotonically decreasing function of the fraction of data falsifying nodes ($0<\alpha\leq 1$) under the perfect secrecy constraint.
\end{proposition}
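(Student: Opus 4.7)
The plan is to substitute the perfect-secrecy constraint into the closed-form expression for $D_{FC}$ from Proposition~\ref{prp1}, reduce the problem to a single-variable function of the Byzantine fraction $\alpha\in(0,1]$, and argue monotonic decrease by a direct sign analysis of the derivative.

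First, I would use Proposition~\ref{prpc} to eliminate both the compression ratio and the noise strength: $c$ is fixed at $c_{\max}$, and perfect secrecy forces $\kappa=1/(P_b\alpha)$. Plugging these into the approximation of Proposition~\ref{prp1} and letting $S=\|\mu\|_2^2/\sigma^2$, the identities $(1-P_b\kappa)^2=(1-\alpha)^2/\alpha^2$ and $\kappa^2P_t=P_t/(\alpha^2P_b^2)$ turn the Byzantine contribution into $\alpha\, cSP_b^2(1-\alpha)^2/(cSP_t+\alpha^2P_b^2)$. Adding the honest-node term $(1-\alpha)cS$, factoring $(1-\alpha)$, and collecting the numerator collapses $D_{FC}$ to the compact form
\begin{equation*}
D_{FC}(\alpha)\;=\;cS(1-\alpha)\cdot\frac{cSP_t+\alpha P_b^2}{cSP_t+\alpha^2 P_b^2}.
\end{equation*}

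Next, I would differentiate. Since $D_{FC}(\alpha)>0$ on $(0,1)$, the logarithmic derivative is the cleanest object to study:
\begin{equation*}
\frac{D_{FC}'(\alpha)}{D_{FC}(\alpha)}\;=\;-\frac{1}{1-\alpha}+\frac{P_b^2}{cSP_t+\alpha P_b^2}-\frac{2\alpha P_b^2}{cSP_t+\alpha^2 P_b^2}.
\end{equation*}
Writing $A=cSP_t$ and $B=P_b^2$ and clearing denominators, the desired inequality $D_{FC}'(\alpha)<0$ is equivalent to the polynomial condition $A^2-AB(1-4\alpha+\alpha^2)+\alpha^2B^2>0$ for all $\alpha\in(0,1)$. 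Viewed as a quadratic in $A$ with positive leading coefficient, its discriminant factors as $B^2(1-6\alpha+\alpha^2)(1-\alpha)^2$, which is non-positive on $\alpha\in[3-2\sqrt{2},1]$; so monotonicity holds unconditionally on that right-hand range. On the remaining window $\alpha\in(0,3-2\sqrt{2})$ the two real roots in $A$ exist, and a short calculation shows that the larger root equals $B$ at $\alpha=0$ and strictly decreases toward its coincident value at $\alpha=3-2\sqrt{2}$. Hence the single cut-off $A>B$---equivalently $cSP_t>P_b^2$, which the high-SNR hypothesis $\|\mu\|_2^2/\sigma^2>P_b^2/P_t$ delivers---keeps $A$ above the larger root throughout, yielding $D_{FC}'(\alpha)<0$ on all of $(0,1)$.

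The main obstacle is the small-$\alpha$ window: there the displayed polynomial is not sign-definite for arbitrary $A$, and the payoff of recognising the quadratic-in-$A$ structure is precisely that the SNR hypothesis enters as a single clean threshold rather than a case analysis over $\alpha$. Verifying monotonicity of the larger root in $\alpha$---either by differentiating it explicitly at $\alpha=0$ or by tracking how the two roots merge at $\alpha=3-2\sqrt{2}$---is the one genuinely delicate step; everything else is routine algebraic simplification following the substitution $\kappa=1/(P_b\alpha)$.
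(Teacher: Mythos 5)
Your proposal follows the same skeleton as the paper's proof: substitute $\kappa=1/(P_b\alpha)$ and $c=c_{\max}$ into the approximation of Proposition~\ref{prp1}, obtain the closed form (your compact expression $cS(1-\alpha)(A+\alpha B)/(A+\alpha^2 B)$ with $A=cSP_t$, $B=P_b^2$ is exactly the paper's $D_{FC}(c_{\max},\kappa=1/(P_b\alpha))$ after clearing fractions), differentiate, and reduce monotonicity to positivity of the quadratic $A^2-AB(1-4\alpha+\alpha^2)+\alpha^2B^2$; your numerator agrees with the paper's displayed $dD_{FC}/d\alpha$. Where you genuinely diverge is the last inequality. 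The paper sets $x=A/B$ and shows $x+\alpha^2/x+4\alpha>1+\alpha^2$ via the auxiliary function $F(\alpha)=x+\alpha^2/x-(1+\alpha^2)$, which is decreasing in $\alpha$ when $x>1$ and positive at $\alpha=1$ by the AM--GM inequality. You instead treat the expression as a quadratic in $A$, compute the discriminant $B^2(1-6\alpha+\alpha^2)(1-\alpha)^2$, and split at $\alpha=3-2\sqrt{2}$. Both are correct; your route has the merit of exposing that for $\alpha\ge 3-2\sqrt{2}$ no SNR condition is needed at all, while the paper's argument (equivalently, the factorization $x+\alpha^2/x-(1+\alpha^2)=(x-1)(x-\alpha^2)/x$) is shorter and avoids the case split.

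Two remarks. First, the step you flag as delicate---strict monotonicity of the larger root $A_+(\alpha)$---is not actually needed: all you require is $A_+(\alpha)\le B$ on $(0,3-2\sqrt{2})$, and this follows from the identity $(1+4\alpha-\alpha^2)^2-(1-\alpha)^2(1-6\alpha+\alpha^2)=16\alpha\ge 0$, which gives $(1-\alpha)\sqrt{1-6\alpha+\alpha^2}\le 1+4\alpha-\alpha^2$ and hence $2A_+/B\le 2$. Second, your threshold $A>B$, i.e. $c_{\max}P_t\|\mu\|_2^2/\sigma^2>P_b^2$, is not literally implied by the stated hypothesis $\|\mu\|_2^2/\sigma^2>P_b^2/P_t$, since $c_{\max}\le 1$; this is an imprecision you share with the paper, whose own proof likewise requires $x=DP_t/P_b^2>1$ with $D=c_{\max}\|\mu\|_2^2/\sigma^2$, so it is not a defect of your argument relative to the original.
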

\begin{proof}
Deflection coefficient at the FC under the perfect secrecy constraint can be expressed as
\begin{equation*}
D_{FC}(c_{max},\kappa=1/(P_b\alpha))=\frac{\alpha(1-\frac{1}{\alpha})^2}{\frac{P_t}{\alpha^2P_b^2}+\frac{1}{D}} + (1-\alpha) D
\end{equation*}
with $D=c_{max}\frac{\| \mu\|_2^2}{\sigma^2}$. Now, deriving the derivative of $D_{FC}$ with respect to $\alpha$ results in
\begin{small}
\begin{equation*}
\frac{dD_{FC}}{d\alpha}=\frac{P_tP_b^2D^2(1-4\alpha+\alpha^2)-DP_b^4\alpha^2-P_t^2 D^3}{(P_t D+\alpha^2P_b^2)^2}.
\end{equation*}
\end{small} 
\noindent Next, we show that $\frac{dD_{FC}}{d\alpha}<0$. 

First, let us define $F(\alpha)=x+\alpha^2\frac{1}{x}-(1+\alpha^2)$. It is easy to show that $F(\alpha)$ is a decreasing function of $\alpha$ if $x>1$. This also implies that $F(\alpha)>0$ if and only if $F(\alpha=1)>0 \Leftrightarrow (x+\frac{1}{x})>2$. Note that, $(x+\frac{1}{x})>2$, which follows from the fact that arithmetic mean is greater than the geometric mean. Having shown that $F(\alpha)>0$, we return back to showing that $\frac{dD_{FC}}{d\alpha}<0$. We start with the inequality

\begin{small}
\begin{eqnarray*}
&&
x+\alpha^2\frac{1}{x}-(1+\alpha^2)>0\\
&\Leftrightarrow&
x+\alpha^2\frac{1}{x}>(1+\alpha^2)\\
&\Rightarrow&
\frac{4\alpha}{(1+\alpha^2)}+\frac{x}{(1+\alpha^2)}+\frac{\alpha^2}{(1+\alpha^2)}\frac{1}{x}>1\\
\end{eqnarray*}
\end{small}

Now, if we plug in $x=D\frac{P_t}{P_{B}^2}$ in the above inequality and rearrange the terms we get

\begin{small} 
\begin{equation*}
\frac{P_tP_b^2D^2(1-4\alpha+\alpha^2)-DP_b^4\alpha^2-P_t^2 D^3}{(P_t D+\alpha^2P_b^2)^2}<0
\end{equation*}
\end{small} 

which is true if $x=D\frac{P_t}{P_{B}^2}>1$.
\end{proof}

Notice that, Proposition~\ref{prpc} suggests that to maximize the modified deflection coefficient $D_{FC}$ under the perfect secrecy constraint~\eqref{optper}, the network designer should choose the value of $\alpha$ as low as possible under the constraint that $\alpha>0$ and accordingly increase $\kappa$ to satisfy $\alpha P_b\kappa=1$. In practice, $\alpha_{min}$ may be dependent on the application of interest.

Next, to gain insights into Proposition~\ref{prpc} and \ref{prpa}, we present illustrative examples that corroborate our results. In Figure~\ref{perfect}, we plot $D_{FC}$ as a function of fraction of noise injection nodes $\alpha$ and compression ratio $c$ in the perfect secrecy regime when $P_1^0=P_2^1=0.8,\; P_2^0=P_1^1=0.1$ and $\frac{\| \mu\|_2^2}{\sigma^2}=5dB$. It can be seen from the figure that $D_{FC}$ is a monotonically increasing function of $c$ and a monotonically decreasing function of $\alpha$. 

Next, we analyze the behavior of $D_{FC}$ as a function of artificial noise variance $\gamma^{-1}$. This analysis will help us in determining the optimal artificial noise injection parameters.

\begin{proposition}
\label{prps}
In the high signal to noise ratio regime (defined as $\frac{\| \mu\|_2^2}{\sigma^2}>\frac{P_b^2}{P_t}$ where $\sigma^2=(\alpha^{-1}+\gamma^{-1}+\beta^{-1})$)
with perfect secrecy constraint $(i.e.,\;\alpha P_b\kappa=1)$, the optimal artificial noise is a deterministic signal with value $\frac{\mu}{\alpha_{min} P_b}$, i.e., $f_{W_i}(w_i)=\delta(w_i-\frac{\mu}{\alpha_{min} P_b})$.
\end{proposition}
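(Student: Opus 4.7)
The plan is to use Propositions~\ref{prpc} and~\ref{prpa} to reduce the optimization in~\eqref{optper} to a single remaining parameter, and then to show that the objective is monotone in that parameter. By Proposition~\ref{prpc} the perfect-secrecy constraint imposes $\alpha P_b\kappa = 1$, so $\kappa = 1/(\alpha P_b)$, and by Proposition~\ref{prpa} the choice $\alpha = \alpha_{\min}$ maximizes $D_{FC}$ over the fraction $\alpha$, fixing $\kappa = 1/(\alpha_{\min} P_b)$ and hence $D_i = \kappa\mu = \mu/(\alpha_{\min} P_b)$. Because the compression ratio can likewise be set to $c = c_{\max}$ with no impact on secrecy, the only knob left in the artificial-noise distribution $W_i \sim \mathcal{N}(D_i,\gamma^{-1}I_P)$ is its variance $\gamma^{-1}$, which enters $D_{FC}$ only through the aggregate $\sigma^2 = \alpha^{-1}+\beta^{-1}+\gamma^{-1}$.

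Substituting $\kappa = 1/(\alpha_{\min} P_b)$ and $\alpha = \alpha_{\min}$ into the closed-form expression from Proposition~\ref{prp1}, the objective becomes
\begin{equation*}
D_{FC}(\gamma^{-1}) = \frac{\alpha_{\min}(1-1/\alpha_{\min})^2}{P_t/(\alpha_{\min}^2 P_b^2) + \sigma^2/(c_{\max}\|\mu\|_2^2)} + (1-\alpha_{\min}) c_{\max}\frac{\|\mu\|_2^2}{\sigma^2}.
\end{equation*}
The second summand is manifestly strictly decreasing in $\sigma^2$. In the first summand the numerator does not involve $\sigma^2$ and the denominator is strictly increasing in $\sigma^2$, so that term is also strictly decreasing in $\sigma^2$. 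Since $\sigma^2$ is a strictly increasing function of $\gamma^{-1}$, it follows that $D_{FC}(\gamma^{-1})$ is strictly decreasing on $(0,\infty)$, and by continuity its supremum is attained in the limit $\gamma^{-1}\to 0$.

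At $\gamma^{-1}=0$ the Gaussian $\mathcal{N}(D_i,\gamma^{-1}I_P)$ degenerates to a point mass at its mean, so $W_i$ is deterministic with value $D_i = \mu/(\alpha_{\min}P_b)$, yielding $f_{W_i}(w_i) = \delta(w_i - \mu/(\alpha_{\min}P_b))$ as claimed. I would close by checking that the high-SNR hypothesis $\|\mu\|_2^2/\sigma^2 > P_b^2/P_t$ required by Proposition~\ref{prpa} is preserved under shrinking $\gamma^{-1}$, which is immediate since reducing $\gamma^{-1}$ only enlarges $\|\mu\|_2^2/\sigma^2$; thus the two optimization steps are mutually consistent. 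The main conceptual obstacle is simply recognizing that once perfect secrecy ties $\kappa$ to $\alpha$ and Proposition~\ref{prpa} pins $\alpha$, the artificial-noise law is parametrized solely by $\gamma^{-1}$; after that observation the argument is a one-line monotonicity check rather than a genuine calculus-of-variations problem over all admissible noise distributions.
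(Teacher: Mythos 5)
Your proposal is correct and follows essentially the same route as the paper: invoke Proposition~\ref{prpa} to pin $\alpha=\alpha_{\min}$ (hence $\kappa=1/(\alpha_{\min}P_b)$ under perfect secrecy), then argue that $D_{FC}$ is monotonically decreasing in the artificial-noise variance $\gamma^{-1}$, so the optimum degenerates to a point mass at $\mu/(\alpha_{\min}P_b)$. In fact your write-up is somewhat more complete than the paper's, which merely asserts $\frac{dD_{FC}}{d\gamma^{-1}}<0$ without exhibiting the term-by-term monotonicity in $\sigma^2$ or checking that the high-SNR condition is preserved as $\gamma^{-1}\to 0$.
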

\begin{proof}
The proof follows from Proposition~\ref{prpa} and the fact that $D_{FC}$ is a monotonically decreasing function of the variance $\gamma^{-1}$ of the artificial noise, i.e., $\frac{d D_{FC}}{d\gamma^{-1}}<0$. 
\end{proof}
\begin{figure}[t]
\centering
    \includegraphics[width=3.5in]{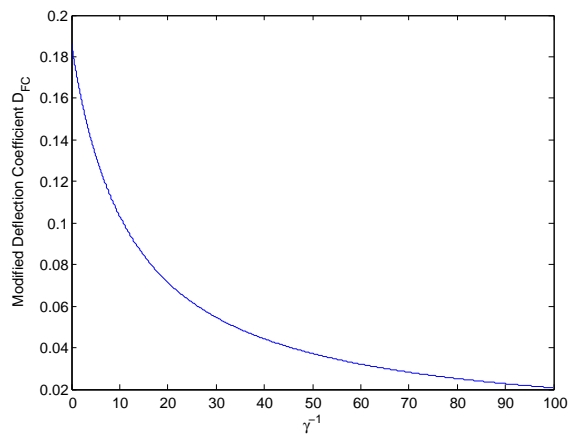}
\vspace{-0.2in}
    \caption{Modified Deflection Coefficient as a function of $\gamma^{-1}$ in perfect secrecy regime.}
    \label{perfectsigma}
\end{figure}

In Figure~\ref{perfectsigma}, we plot the modified deflection coefficient at the FC as a function of the variance of the artificial noise when $P_1^0=P_2^1=0.8,\; P_2^0=P_1^1=0.1$ and $(c,\alpha)=(0.2,0.3)$. We assume that the signal of interest is $s\sim\mathcal{N}(\mu,I_P)$ with $\|\mu\|_2^2=5$ and noise $v_i\sim\mathcal{N}(0,10 I_P)$.
It can be seen that $D_{FC}$ is a monotonically decreasing function of the artificial noise variance $\gamma^{-1}$. This
observation implies that the optimal artificial noise is a deterministic signal.
Using these results, the solution of the optimization problem~\eqref{optper} is summarized in the following theorem.
\begin{theorem}
\label{main}
To maximize the modified deflection coefficient at the FC under the perfect secrecy constraint, the network designer should choose $c=c_{max}$, $\alpha=\alpha_{min}$ and deterministic artificial noise with value $\frac{\mu}{\alpha_{min} P_b}$.
\end{theorem}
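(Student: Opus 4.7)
The plan is to assemble Theorem~\ref{main} by chaining the three propositions that immediately precede it, each of which isolates one of the three design variables ($c$, $\alpha$, and the distribution of $W_i$) and characterizes its optimum under the perfect secrecy constraint $\alpha P_b \kappa = 1$. Because the theorem is really a packaging statement, the work has already been done; I just need to verify that the three one-dimensional optimizations can be composed into a joint optimum.

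First, I would use Proposition~\ref{prpc} to eliminate the compression ratio. On the perfect secrecy manifold $\alpha P_b \kappa = 1$, the eavesdropper's deflection coefficient $D_{EV}$ is independent of $c$, so the constraint $D(\hat{y_i})=0$ is preserved for every $c$ in the admissible range; meanwhile $D_{FC}$ is strictly increasing in $c$, so the maximizer in the $c$-coordinate is $c = c_{max}$. This reduces~\eqref{optper} to the two-variable problem~\eqref{opt2}. Next, substituting $c = c_{max}$ and $\kappa = 1/(P_b\alpha)$, I would invoke Proposition~\ref{prpa}: under the high-SNR assumption $\|\mu\|_2^2/\sigma^2 > P_b^2/P_t$, the objective $D_{FC}(c_{max},1/(P_b\alpha))$ is strictly decreasing in $\alpha$ on $(0,1]$, so it attains its maximum over the feasible set $\alpha \geq \alpha_{min}$ at $\alpha = \alpha_{min}$. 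This in turn pins down $\kappa = 1/(P_b\alpha_{min})$, and consequently the mean of the artificial noise is forced to be $D_i = \kappa\mu = \mu/(\alpha_{min}P_b)$.

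The remaining freedom is the variance $\gamma^{-1}$ of $W_i$, and this is handled by Proposition~\ref{prps}: $D_{FC}$ is strictly decreasing in $\gamma^{-1}$, so the optimum is attained in the degenerate limit in which $W_i$ collapses to a Dirac mass at its prescribed mean, giving $f_{W_i}(w_i) = \delta(w_i - \mu/(\alpha_{min}P_b))$.

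The one subtlety, which I would address explicitly, is justifying that the sequential optimization yields the joint optimum — i.e., that the three monotonicities do not interact adversely through the coupling $\sigma^2 = \alpha^{-1}+\beta^{-1}+\gamma^{-1}$. The perfect secrecy constraint does not couple $c$ to $(\alpha,W_i)$, and once $\alpha$ is fixed the constraint pins the mean of $W_i$ but leaves its variance free; since shrinking $\gamma^{-1}$ only decreases $\sigma^2$, it strengthens the effective SNR $\|\mu\|_2^2/\sigma^2$ and therefore reinforces (rather than violates) the SNR hypothesis needed for Proposition~\ref{prpa}. So the three marginal monotonicities are mutually consistent, and the triple $(c_{max},\,\alpha_{min},\,\delta(w_i - \mu/(\alpha_{min}P_b)))$ is a joint maximizer of~\eqref{optper}. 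This mutual-consistency check is the only non-bookkeeping step; everything else follows directly from the propositions.
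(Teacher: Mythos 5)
Your proposal is correct and follows essentially the same route as the paper, which presents Theorem~\ref{main} as a direct consequence of chaining Propositions~\ref{prpc}, \ref{prpa}, and \ref{prps} (fix $c=c_{max}$ since $D_{FC}$ increases in $c$ while $D_{EV}$ is unaffected on the manifold $\alpha P_b\kappa=1$, then take $\alpha=\alpha_{min}$ by monotone decrease in $\alpha$, then collapse $W_i$ to a point mass by monotone decrease in $\gamma^{-1}$). Your added check that the three marginal monotonicities compose into a joint optimum --- in particular that shrinking $\gamma^{-1}$ only strengthens the high-SNR hypothesis through $\sigma^2$ --- is a detail the paper leaves implicit, but it does not change the argument.
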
 

Notice that, Theorem~\ref{main} suggests that to maximize the modified deflection coefficient $D_{FC}$ under the perfect secrecy constraint~\eqref{optper}, the network designer should choose the value of $\alpha$ as low as possible under the constraint that $\alpha>0$ and accordingly increase $\kappa$ to satisfy $\alpha P_b\kappa=1$. Also, the optimal artificial noise is a deterministic signal with value $\frac{\mu}{\alpha_{min} P_b}$, i.e., $f_{W_i}(w_i)=\delta(w_i-\frac{\mu}{\alpha_{min} P_b})$. 

\section{Conclusion and Future Work}
\label{sec6}
We considered the problem of collaborative compressive detection under a physical layer secrecy constraint.
First, we proposed the collaborative compressive detection framework and showed that through collaboration the loss due to compression when using a single node can be recovered. Second, we studied the problem where the network works in the presence of an eavesdropper. We proposed the use of artificial noise injection techniques to improve security performance. We also considered the problem of
determining optimal system parameters which maximize the
detection performance at the FC, while ensuring perfect secrecy
at the eavesdropper. Optimal system parameters with perfect secrecy guarantees were obtained in a closed form. 
There are still many interesting questions that remain to be explored in the future work such as an analysis of the problem in scenarios where the perfect secrecy constraint is relaxed. Note that, some analytical methodologies
used in this paper are certainly exploitable for studying more general detection problems such as detection of non Gaussian signals in correlated noise.
Other questions such as the case where communication channels are noisy can also be investigated.
\bibliographystyle{IEEEtran}
\bibliography{Conf,Book,Journal}
\end{document}